\newcommand{\ENTROPY}{\ensuremath{\mathsf{H}}}
\newcommand{\COMPLEXITY}{\ensuremath{\mathsf{C}}}
\newcommand{\ST}{\ensuremath{\mathsf{ST}}}
\newcommand{\SLT}{\ensuremath{\mathsf{SLT}}}
\newcommand{\KL}{\ensuremath{\mathsf{KL}}}
\newcommand{\Lemma}[1]{Lemma~\ref{#1}}
\newcommand{\ltdots}{..}
\newcommand{\SL}{\mathtt{suffixLink}}
\newcommand{\BWT}{\ensuremath{\mathsf{BWT}}}
\newcommand\SP[1]{\mathtt{sp}(#1)}
\newcommand\EP[1]{\mathtt{ep}(#1)} 
\newcommand\INTERVAL[1]{\mathtt{range}(#1)}
\newcommand{\repr}{\ensuremath{\mathtt{repr}}}
\newtheorem{theorem}{Theorem}
\newtheorem{corollary}{Corollary}
\begin{document} 

\title{A framework for space-efficient string kernels}
\author[1,2]{Djamal Belazzougui}
\author[1,2]{Fabio Cunial}
\affil[1]{Department of Computer Science, University of Helsinki, Finland.\thanks{This work was partially supported by Academy of Finland under grant 250345 (Center of Excellence in Cancer Genetics Research).}}
\affil[2]{Helsinki Institute for Information Technology, Finland.}
\maketitle

\maketitle

\begin{abstract}
String kernels are typically used to compare genome-scale sequences whose length makes alignment impractical, yet their computation is based on data structures that are either space-inefficient, or incur large slowdowns. We show that a number of exact string kernels, like the $k$-mer kernel, the substrings kernels, a number of length-weighted kernels, the minimal absent words kernel, and kernels with Markovian corrections, can all be computed in $O(nd)$ time and in $o(n)$ bits of space in addition to the input, using just a $\mathtt{rangeDistinct}$ data structure on the Burrows-Wheeler transform of the input strings, which takes $O(d)$ time per element in its output. The same bounds hold for a number of measures of compositional complexity based on multiple value of $k$, like the $k$-mer profile and the $k$-th order empirical entropy, and for calibrating the value of $k$ using the data.
%
\end{abstract}

\section{Introduction}

Given two strings $T^1$ and $T^2$, a \emph{kernel} is a function that simultaneously converts $T^1$ and $T^2$ into vectors $\mathbf{T}^1$ and $\mathbf{T}^2$ in $\mathbb{R}^n$ for some $n>0$, and computes a similarity or a distance measure between $\mathbf{T}^1$ and $\mathbf{T}^2$, \emph{without building and storing $\mathbf{T}^i$ explicitly} \cite{shawe2004kernel}. Kernels are often the method of choice for comparing extremely long strings, like genomes, read sets, and metagenomic samples, whose size makes alignment infeasible, yet their computation is typically based on space-inefficient data structures, like (truncated) suffix trees, or on space-efficient data structures with large slowdowns, like compressed suffix trees (see e.g. \cite{apostolico2010maximal,gog2011compressed} and references therein). The (possibly infinite) dimensions of $\mathbf{T}^i$ are, for example, all strings of a specific family on the alphabet of $T^1$ and $T^2$, and the value assigned to vector $\mathbf{T}^i$ along dimension $W$ corresponds to the number of occurrences of string $W$ in $T^i$, often rescaled and corrected in domain-specific ways. $\mathbf{T}^i$ is often called \emph{composition vector}, and a large number of its components can be zero in practice. In this paper we focus on space- and time-efficient algorithms for computing the \emph{cosine of the angle between two composition vectors $\mathbf{T}^1$ and $\mathbf{T}^2$}, i.e. on computing the kernel $\kappa(\mathbf{T}^1,\mathbf{T}^2) = N/\sqrt{D^1 D^2} \in [-1..1]$, where $N=\sum_{W}\mathbf{T}^1[W]\mathbf{T}^2[W]$ and $D^{i}=\sum_{W}\mathbf{T}^i[W]^2$. This measure of similarity can be converted into a distance $d(\mathbf{T}^1,\mathbf{T}^2)=(1-\kappa(\mathbf{T}^1,\mathbf{T}^2))/2 \in [0 \ltdots 1]$, and the algorithms we describe can be applied to compute norms of vector $\mathbf{T}^1-\mathbf{T}^2$, like the $p$-norm and the infinity norm. When $\mathbf{T}^1$ and $\mathbf{T}^2$ are bitvectors, we are more interested in interpreting them as sets and in computing the Jaccard distance $J(\mathbf{T}^1,\mathbf{T}^2) = ||\mathbf{T}^1 \wedge \mathbf{T}^2|| / || \mathbf{T}^1 \vee \mathbf{T}^2 || = ||\mathbf{T}^1 \wedge \mathbf{T}^2|| / ( ||\mathbf{T}^1|| + ||\mathbf{T}^2|| + ||\mathbf{T}^1 \wedge \mathbf{T}^2|| )$, where $\wedge$ and $\vee$ are the bitwise AND and OR operators, and where $|| \cdot ||$ measures the number of ones in a bitvector.

Given a data structure that supports $\mathtt{rangeDistinct}$ queries on the Burrows-Wheeler transform of each string in input, we show that a number of popular string kernels, like the $k$-mer kernel, the substrings kernels, a number of length-weighted kernels, the minimal absent words kernel, and kernels with Markovian corrections, can all be computed in $O(nd)$ time and in $o(n)$ bits of space in addition to the input, \emph{all in a single pass over the BWTs of the input strings}, where $d$ is the time taken by the $\mathtt{rangeDistinct}$ query per element in its output. The same bounds hold for computing a number of measures of compositional complexity \emph{for multiple values of $k$ at the same time}, like the $k$-mer profile and the $k$-th order empirical entropy, and for choosing the value of $k$ used in $k$-mer kernels from the data. All these algorithms become $O(n)$ using the $\mathtt{rangeDistinct}$ data structure described in \cite{BNV13}, and concatenating this setup to the BWT construction algorithm described in \cite{Belazzougui14}, we can compute all such kernels and complexity measures \emph{from the input strings} in randomized $O(n)$ time and in $O(n\log{\sigma})$ bits of space in addition to the input. Finally, we show that measures of expectation based on Markov models are related to the left and right extensions of maximal repeats.

\section{Preliminaries}

\subsection{Strings}

Let $\Sigma=[1..\sigma]$ be an integer alphabet, let $\#=0$, $\#_1=-1$ and $\#_2=-2$ be distinct separators not in $\Sigma$, and let $T=[1..\sigma]^{n-1}\#$ be a string. We assume $\sigma \in o(\sqrt{n}/\log{n})$ throughout the paper. A \emph{$k$-mer} is any string $W \in [1..\sigma]$ of length $k>0$. We denote by $f_{T}(W)$ the number of (possibly overlapping) occurrences of a string $W$ in the circular version of $T$, and we use the shorthand $p_{T}(W)=f_{T}(W)/(n-|W|)$ to denote an approximation of the \emph{empirical probability} of observing $W$ in $T$, assuming that all positions of $T$ except the last $|W|$ ones are equally probable starting positions for $W$. A \emph{repeat} $W$ is a string that satisfies $f_{T}(W)>1$. We denote by $\Sigma^{\ell}_{T}(W)$ the set of characters $\{a \in [0..\sigma] : f_{T}(aW)>0\}$ and by $\Sigma^{r}_{T}(W)$ the set of characters $\{b \in [0..\sigma] : f_{T}(Wb)>0\}$. A repeat $W$ is \emph{right-maximal} (respectively, \emph{left-maximal}) iff $|\Sigma^{r}_{T}(W)|>1$ (respectively, iff $|\Sigma^{\ell}_{T}(W)|>1$). It is well known that $T$ can have at most $n-1$ right-maximal substrings and at most $n-1$ left-maximal substrings. A \emph{maximal repeat} of $T$ is a repeat that is both left- and right-maximal.

For reasons of space we assume the reader to be familiar with the notion of \emph{suffix tree} $\ST_T$ of a string $T$, and with the notion of \emph{generalized suffix tree} of two strings, which we do not define here. 
We denote by $\ell(v)$ the string label of a node $v$ in a suffix tree. It is well known that a substring $W$ of $T$ is right-maximal iff $W=\ell(v)$ for some internal node $v$ of $\ST_T$. We assume the reader to be familiar with the notion of \emph{suffix link} connecting a node $v$ with $\ell(v)=aW$ for some $a \in [0..\sigma]$ to a node $w$ with $\ell(w)=W$: we say that $w=\SL(v)$ in this case. Here we just recall that suffix links and internal nodes of $\ST_T$ form a tree, called the \emph{suffix-link tree} of $T$ and denoted by $\SLT_T$, and that inverting the direction of all suffix links yields the so-called \emph{explicit Weiner links}. Given an internal node $v$ and a symbol $a \in [0..\sigma]$, it might happen that string $a\ell(v)$ does occur in $T$, but that it is not right-maximal, i.e. it is not the label of any internal node of $\ST_T$: all such left extensions of internal nodes that end in the middle of an edge are called \emph{implicit Weiner links}. An internal node $v$ of $\ST_T$ can have more than one outgoing Weiner link, and all such Weiner links have distinct labels: in this case, $\ell(v)$ is a maximal repeat. 
%
It is known that the number of suffix links (or, equivalently, of explicit Weiner links) is upper-bounded by $2n-2$, and that the number of implicit Weiner links can be upper-bounded by $2n-2$ as well.

\subsection{Enumerating right-maximal substrings and maximal repeats}\label{sec:enumerator}

For reasons of space we assume the reader to be familiar with the notion and uses of the Burrows-Wheeler transform of $T$, including the $C$ array, the $\mathtt{rank}$ function, and backward searching. In this paper we use $\BWT_T$ to denote the BWT of $T$, we use $\INTERVAL{W} = [\SP{W}..\EP{W}]$ to denote the lexicographic interval of a string $W$ in a BWT that is implicit from the context, and we use $\Sigma_{i,j}$ to denote the set of distinct characters that occur inside interval $[i..j]$ of a string that is implicit from the context. We also denote by $\mathtt{rangeDistinct}(i,j)$ the function that returns the set of tuples $\{(c,\mathtt{rank}(c,p_c),\mathtt{rank}(c,q_c)) : c \in \Sigma_{i,j} \}$, where $p_c$ and $q_c$ are the first and the last occurrence of $c$ inside interval $[i..j]$, respectively. Here we focus on a specific application of $\BWT_T$: enumerating all the right-maximal substrings of $T$, or equivalently all the internal nodes of $\ST_T$. In particular, we use the algorithm described in \cite{Belazzougui14} (Section 4.1), which we sketch here for completeness.

Given a substring $W$ of $T$, let $b_1 < b_2 < \dots < b_k$ be the sorted sequence of all the distinct characters in $\Sigma^{r}(W)$, and let $a_1,a_2,\dots,a_h$ be the list of all the characters in $\Sigma^{\ell}(W)$, not necessarily sorted. Assume that we represent a substring $W$ of $T$ as a pair $\repr(W)=(\mathtt{chars}[1..k],\mathtt{first}[1..k+1])$, where $\mathtt{chars}[i]=b_i$, $\INTERVAL{Wb_i}=[\mathtt{first}[i]..\mathtt{first}[i+1]-1]$ for $i \in [1..k]$, and $\INTERVAL$ refers to $\BWT_T$. Note that $\INTERVAL{W}=[\mathtt{first}[1]..\mathtt{first}[k+1]-1]$, since it coincides with the concatenation of the intervals of the right extensions of $W$ in lexicographic order. If $W$ is not right-maximal, array $\mathtt{chars}$ in $\repr(W)$ has length one. Given a data structure that supports $\mathtt{rangeDistinct}$ queries on $\BWT_T$, and given the $C$ array of $T$, there is an algorithm that converts $\repr(W)$ into the sequence $a_1,\dots,a_h$ and into the corresponding sequence $\repr(a_{1}W),\dots,\repr(a_{h}W)$, in $O(de)$ time and $O(\sigma^{2}\log{n})$ bits of space in addition to the input and the output \cite{Belazzougui14}, where $d$ is the time taken by the $\mathtt{rangeDistinct}$ operation per element in its output, and $e$ is the number of distinct strings $a_{i}Wb_{j}$ that occur in the circular version of $T$, where $i \in [1..h]$ and $j \in [1..k]$. We encapsulate this algorithm into a function that we call $\mathtt{extendLeft}$.

If $a_i W$ is right-maximal, i.e. if array $\mathtt{chars}$ in $\repr(a_{i}W)$ has length greater than one, we push pair $(\repr(a_i W),|W|+1)$ onto a stack $S$. In the next iteration we pop the representation of a string from the stack and we repeat the process, until the stack itself becomes empty. This process is equivalent to following all the explicit Weiner links from the node $v$ of $\ST_T$ with $\ell(v)=W$, not necessarily in lexicographic order. Thus, running the algorithm from a stack initialized with $\repr(\varepsilon)$ is equivalent to performing a depth-first (but not necessarily a preorder) traversal of the suffix-link tree of $T$, which guarantees to enumerate all the right-maximal substrings of $T$. 
Every operation performed by the algorithm can be charged to a distinct node or Weiner link of $\ST_T$, thus the algorithm runs in $O(nd)$ time. The depth of the stack is $O(\log{n})$ rather than $O(n)$, since at every iteration we push the pair $(\repr(a_i W),|a_i W|)$ with largest $\INTERVAL{a_i W}$ first. Every suffix-link tree level in the stack contains at most $\sigma$ pairs, and each pair takes at most $\sigma\log{n}$ bits of space, thus the total space used by the stack is $O(\sigma^2 \log^{2}{n})$ bits. The following theorem follows from our assumption that $\sigma \in o(\sqrt{n}/\log{n})$:

\begin{theorem}[\cite{Belazzougui14}]\label{thm:enumerator}
Let $T \in [1..\sigma]^{n-1}\#$ be a string. Given a data structure that supports $\mathtt{rangeDistinct}$ queries on $\BWT_T$, we can enumerate all the right-maximal substrings $W$ of $T$, and for each of them we can return $|W|$, $\repr(W)$, the sequence $a_1,a_2,\dots,a_h$ of all characters in $\Sigma_{T}^{\ell}(W)$ (not necessarily sorted), and the sequence $\repr(a_{1}W),\dots,\repr(a_{h}W)$, in $O(nd)$ time and in $o(n)$ bits of space in addition to the input and the output, where $d$ is the time taken by the $\mathtt{rangeDistinct}$ operation per element in its output.
\end{theorem}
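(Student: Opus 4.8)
The plan is to drive the traversal with a stack and the $\mathtt{extendLeft}$ primitive, exploiting that the right-maximal substrings of $T$ are exactly the internal nodes of $\ST_T$, i.e. the nodes of $\SLT_T$, and that following the explicit Weiner links out of the node labeled $W$ reaches precisely the right-maximal strings $a_iW$. Concretely, I would initialize the stack with $\repr(\varepsilon)$; at each step I pop a $\repr(W)$, invoke $\mathtt{extendLeft}$ to obtain $|W|$, the characters $a_1,\dots,a_h$ of $\Sigma^{\ell}_{T}(W)$ and the representations $\repr(a_1W),\dots,\repr(a_hW)$, report this tuple as the output associated with $W$, and push back only those $\repr(a_iW)$ whose $\mathtt{chars}$ array has length exceeding one (the right-maximal extensions, i.e. the explicit Weiner links). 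Since every right-maximal substring is reachable from $\varepsilon$ along explicit Weiner links, this enumerates each internal node of $\ST_T$ exactly once and, by the stated guarantees of $\mathtt{extendLeft}$, produces exactly the required per-node data; correctness thus reduces to those guarantees together with the characterisation of right-maximality by $|\mathtt{chars}|>1$.

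For the time bound I would argue by amortisation over $\ST_T$. A single iteration on $W$ costs $O(d\,e)$, where $e$ counts the distinct strings $a_iWb_j$ occurring in the circular $T$; I would charge each such string to a distinct explicit or implicit Weiner link, and since both kinds number at most $2n-2$, the charges sum to $O(n)$ and the total time is $O(nd)$. The $O(\sigma^2\log n)$ bits of scratch space of one $\mathtt{extendLeft}$ call are reused across iterations and do not accumulate.

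The crux is bounding the stack. I would first record the cardinality identity $\sum_{i=1}^{h}|\INTERVAL{a_iW}| = |\INTERVAL{W}|$, which holds because each occurrence of $W$ in the circular string is preceded by exactly one character, so the children intervals partition the parent interval by size. The stack always holds an antichain of $\SLT_T$, and its items fall into groups, one per node expanded along the current descent; because a Weiner link prepends a character, the group created by expanding a length-$\ell$ node sits at $\SLT_T$ level $\ell+1$, so distinct groups occupy distinct levels and each level carries at most one group, of at most $|\Sigma^{\ell}_{T}(W)| = O(\sigma)$ siblings. Pushing the extensions in order of decreasing $|\INTERVAL{a_iW}|$ puts the smallest on top, so the descent always enters the smallest child; at a node with at least two extensions the cardinality identity forces that child's interval to be at most half the parent's, whereas a node with a single extension contributes an empty group. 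Hence the interval halves at every branching level along the current descent, at most $O(\log n)$ levels carry waiting pairs at once, and the stack holds $O(\sigma\log n)$ pairs of $O(\sigma\log n)$ bits each, i.e. $O(\sigma^2\log^2 n)$ bits. Under $\sigma\in o(\sqrt n/\log n)$ this is $o(n)$, matching the claim.

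I expect this last space argument to be the main obstacle. The items on the stack are an antichain whose intervals live in unrelated regions of the suffix array and need not nest, so the geometric decay cannot be read off from interval containment; it has to be extracted solely from the size-partition identity together with the deliberate choice to descend into the smallest child, and one must separately verify that empty (non-branching) groups inflate the $\SLT_T$ depth without inflating the item count.
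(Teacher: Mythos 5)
Your proposal is correct and follows essentially the same route as the paper's own sketch in Section~\ref{sec:enumerator}: a stack-driven depth-first traversal of the suffix-link tree via $\mathtt{extendLeft}$, with the work charged to nodes and (explicit and implicit) Weiner links, and the stack bounded by pushing the child with the largest interval first so that only $O(\log{n})$ levels hold waiting pairs. Your elaboration of the space bound is in fact more detailed than the paper's one-line justification, and the only loose phrase (``the descent always enters the smallest child'') is harmless, since a level whose largest child is currently active carries an empty group and so contributes nothing to the stack.
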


Theorem \ref{thm:enumerator} does not specify the order in which the right-maximal substrings must be enumerated, nor the order in which the left extensions of a right-maximal substring must be returned. The algorithm we just described can be adapted to return all the maximal repeats of $T$, with the same bounds, by outputting a right-maximal string $W$ iff $|\mathtt{rangeDistinct}(\SP{W},\EP{W})|>1$. A version of the same algorithm can also enumerate all the internal nodes of the \emph{generalized suffix tree} of two string $T^1$ and $T^2$, using $\BWT_{T^1}$ and $\BWT_{T^2}$: in this case, a string $W$ is represented as a quadruple $\repr'(W)=(\mathtt{chars}_{1}[1..k_1],\mathtt{first}_{1}[1..k_1+1],\mathtt{chars}_{2}[1..k_2],\mathtt{first}_{2}[1..k_2+1])$, and we assume that $\mathtt{first}_{i}[1]=0$ iff $W$ does not occur in $T^i$. We call $\mathtt{extendLeft}'$ the function that maps $\repr'(W)$ to the list of its left extensions $\repr'(a_{i}W)$.

\begin{theorem}[\cite{Belazzougui14}]\label{thm:generalizedEnumerator}
Let $T^1 \in [1..\sigma]^{n_1-1}\#_1$ and $T^2 \in [1..\sigma]^{n_2-1}\#_2$ be two strings. Given two data structures that support $\mathtt{rangeDistinct}$ queries on $\BWT_{T^1}$ and on $\BWT_{T^2}$, respectively, we can enumerate all the right-maximal substrings $W$ of $T=T^1 T^2$, and for each of them we can return $|W|$, $\repr'(W)$, the sequence $a_1,a_2,\dots,a_h$ of all characters in $\Sigma_{T^{1}T^2}^{\ell}(W)$ (not necessarily sorted), and the sequence $\repr'(a_{1}W),\dots,\repr'(a_{h}W)$, in $O(nd)$ time and in $o(n)$ bits of space in addition to the input and the output, where $n=n_1+n_2$ and $d$ is the time taken by the $\mathtt{rangeDistinct}$ operation per element in its output.
\end{theorem}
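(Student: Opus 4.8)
The plan is to mimic the single-string enumeration of Theorem~\ref{thm:enumerator}, but to carry synchronized state in both $\BWT_{T^1}$ and $\BWT_{T^2}$ throughout the traversal. The key observation is that a substring $W$ of $T^1T^2$ is an internal node of the generalized suffix tree (equivalently, is right-maximal with respect to the combined occurrences) iff $|\Sigma^{r}_{T^1}(W) \cup \Sigma^{r}_{T^2}(W)|>1$, and that the internal nodes of the generalized suffix tree, together with its explicit and implicit Weiner links, number $O(n)$ with $n=n_1+n_2$, by the same leaf-counting bounds recalled for a single string. The quadruple $\repr'(W)=(\mathtt{chars}_1,\mathtt{first}_1,\mathtt{chars}_2,\mathtt{first}_2)$ is designed precisely to store, separately for each of the two strings, the sorted right-extension characters of $W$ together with the boundaries of the intervals $\INTERVAL{Wb}$ in the corresponding BWT, and the convention $\mathtt{first}_i[1]=0$ records that $W$ does not occur in $T^i$ at all.

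First I would implement $\mathtt{extendLeft}'$ by invoking the single-string $\mathtt{extendLeft}$ logic independently on the interval of $W$ in each of the two BWTs. Each such call issues a $\mathtt{rangeDistinct}$ query and returns, for that string, the left-extension characters of $W$ together with the data needed to build $\repr(a_iW)$ for each of them. The two outputs are then merged: the returned list $a_1,\dots,a_h$ of left-extension characters is the union of the two per-string lists, and for each $a_i$ the quadruple $\repr'(a_iW)$ is assembled by pairing the $T^1$-side block with the $T^2$-side block, inserting an empty block flagged by $\mathtt{first}_i[1]=0$ whenever $a_iW$ occurs in only one of the two strings. Right-maximality of $a_iW$ in the generalized tree is then decided by checking whether the combined set of its right extensions, read off from both halves of $\repr'(a_iW)$, has more than one element; only in that case is the pair $(\repr'(a_iW),|W|+1)$ pushed onto the stack.

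The enumeration is then a depth-first traversal of the generalized suffix-link tree driven by the stack, structurally identical to the single-string case: starting from $\repr'(\varepsilon)$, we repeatedly pop a quadruple, apply $\mathtt{extendLeft}'$, and push its right-maximal left extensions, halting when the stack is empty. To bound the stack depth I would push the extensions in decreasing order of $f_{T^1}(a_iW)+f_{T^2}(a_iW)$, the number of occurrences of $a_iW$ in the two strings, so that the largest child is processed last; the same ``smaller half'' argument as before then gives stack depth $O(\log n)$. Charging each unit of work to a distinct internal node or Weiner link of the \emph{generalized} suffix tree yields $O(nd)$ total time, since each element of a $\mathtt{rangeDistinct}$ output costs $d$ and there are $O(n)$ such nodes and links. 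For space, the stack holds $O(\log n)$ levels, each carrying at most $\sigma$ quadruples of $O(\sigma\log n)$ bits, for a total of $O(\sigma^2\log^2 n)$ bits, which is $o(n)$ under the assumption $\sigma\in o(\sqrt{n}/\log n)$.

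I expect the main obstacle to be the bookkeeping of extensions that are present in one string but absent in the other. Because a left-extension character $a$ may occur in $T^1$ but not in $T^2$ (or vice versa), and because the two $\mathtt{rangeDistinct}$ outputs are produced independently and in possibly different character orders, the merge step must reconcile them into a single consistent quadruple while correctly propagating the $\mathtt{first}_i[1]=0$ flag. The step I would treat most carefully is verifying that, even in the presence of these one-sided extensions and of the merging overhead, each performed operation can still be charged injectively to a node or Weiner link of the generalized tree, so that the $O(nd)$ bound survives; the remaining arguments transfer verbatim from the proof of Theorem~\ref{thm:enumerator}.
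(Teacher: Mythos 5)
Your proposal is correct and follows essentially the same route as the paper: the paper defers to \cite{Belazzougui14} and to the sketch of Section~\ref{sec:enumerator}, namely running the single-string $\mathtt{extendLeft}$ on each BWT, merging the two outputs into the quadruple $\repr'$ with the $\mathtt{first}_i[1]=0$ convention for one-sided occurrences, testing right-maximality on the combined set of right extensions, and reusing the stack-ordering and charging arguments of Theorem~\ref{thm:enumerator} with the generalized suffix tree in place of $\ST_{T}$. Your choice of ordering the pushed children by $f_{T^1}(a_iW)+f_{T^2}(a_iW)$ is the natural analogue of ordering by $|\INTERVAL{a_iW}|$, and your flagged concern about the merge bookkeeping is exactly the point the paper leaves implicit.
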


For reasons of space, we assume throughout the paper that $d$ is the time per element in the output of a $\mathtt{rangeDistinct}$ data structure that is implicit from the context. We also replace $T^i$ by $i$ in subscripts, or we waive subscripts completely whenever they are clear from the context.

\section{Kernels and complexity measures on $k$-mers}\label{sect:kmerKernels}

Given a string $T \in [1..\sigma]^{n-1}\#$ and a length $k>0$, let vector $\mathbf{T}_{k}=[1..\sigma^k]$ be such that $\mathbf{T}_{k}[W]=f_{T}(W)$ for every $W \in [1..\sigma]^k$. The \emph{$k$-mer complexity} $\COMPLEXITY_{k}(T)$ of string $T$ is the number of nonzero components of $\mathbf{T}_{k}$. The \emph{$k$-mer kernel} of two strings $T^1$ and $T^2$ is $\kappa(\mathbf{T}^{1}_{k},\mathbf{T}^{2}_{k})$. Recall that Theorem \ref{thm:enumerator} and \ref{thm:generalizedEnumerator} enumerate all nodes of a suffix tree in no specific order. In this section we describe algorithms to compute $\COMPLEXITY_{k}(T)$ and $\kappa(\mathbf{T}^1_k,\mathbf{T}^2_k)$ in a way that does not depend on the order in which the nodes of a suffix tree are enumerated: we can thus implement such algorithms on top of Theorem \ref{thm:enumerator} and \ref{thm:generalizedEnumerator}. The main idea behind our approach is a telescoping strategy that works by adding and subtracting terms in a sum, as described below:

\begin{theorem} \label{thm:kmerComplexity}
Let $T \in [1..\sigma]^{n-1}\#$ be a string. Given an integer $k$ and a data structure that supports $\mathtt{rangeDistinct}$ queries on $\BWT_T$, we can compute $\COMPLEXITY_{k}(T)$ in $O(nd)$ time and in $o(n)$ bits of space in addition to the input.
\end{theorem}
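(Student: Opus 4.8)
The plan is to reduce $\COMPLEXITY_k(T)$ to a single sum that can be accumulated node-by-node as the enumerator of Theorem~\ref{thm:enumerator} visits the internal nodes of $\ST_T$ in arbitrary order. First I would observe that, since a $k$-mer contains no separator and $T$ has a single trailing $\#$, any length-$k$ window of the circular version of $T$ that wraps around the seam must include the position of $\#$; hence a window over $[1..\sigma]$ cannot wrap, and $\COMPLEXITY_k(T)$ equals the number of distinct length-$k$ substrings of the linear string $T$ that do not contain $\#$. Each distinct length-$k$ substring corresponds to a unique locus at string-depth $k$ in $\ST_T$, which lies on a unique edge $(u,v)$ with $|\ell(u)| < k \le |\ell(v)|$ (a node at depth exactly $k$ being charged to its incoming edge). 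Writing $s(k)$ for the number of such crossing edges, we get $\COMPLEXITY_k(T) = s(k) - 1$, the $-1$ deleting the only length-$k$ substring that contains $\#$, namely the length-$k$ suffix of $T$.

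The core step is a telescoping identity. For a parent–child edge we always have $|\ell(u)| < |\ell(v)|$, so $[\,|\ell(u)| < k \le |\ell(v)|\,] = [\,|\ell(v)| \ge k\,] - [\,|\ell(u)| \ge k\,]$. Summing over all edges of $\ST_T$ and regrouping by endpoints — each non-root node is a child exactly once, each internal node $u$ is a parent exactly $c(u)=|\Sigma^{r}(\ell(u))|$ times, and leaves are never parents — collapses the edge sum into
\begin{equation*}
s(k) = L(k) - \sum_{v}\bigl(c(v)-1\bigr)\,[\,|\ell(v)| \ge k\,],
\end{equation*}
where $v$ ranges over the internal nodes of $\ST_T$ (the root contributes $0$, its depth being $0<k$) and $L(k)$ counts the suffixes of $T$ of length at least $k$. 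Since $T$ has $n$ suffixes of lengths $1,\dots,n$, we have $L(k)=n-k+1$, and therefore
\begin{equation*}
\COMPLEXITY_k(T) = (n-k) - \sum_{v}\bigl(c(v)-1\bigr)\,[\,|\ell(v)| \ge k\,].
\end{equation*}
I would sanity-check this at $k=1$, where the sum evaluates to $n-\deg(\mathrm{root})$ and the formula returns $\deg(\mathrm{root})-1$, i.e. the number of distinct characters of $T$ in $[1..\sigma]$, as it should.

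The decisive feature is that each internal node $v$ contributes a term depending only on its own string-depth and its own out-degree, both of which Theorem~\ref{thm:enumerator} supplies for free: $c(v)$ is the length of the $\mathtt{chars}$ array in $\repr(v)$, and $|\ell(v)|=|W|$ is returned with each right-maximal substring $W$. The algorithm thus maintains one $O(\log n)$-bit counter: whenever the enumerator emits a right-maximal $W$ with $|W|\ge k$, it adds $c-1$ to the counter, and at the end it outputs $(n-k)$ minus the counter. As each node's contribution is independent of the others, the computation is oblivious to the unspecified enumeration order, which is exactly what is required to run on top of Theorem~\ref{thm:enumerator}. The enumerator costs $O(nd)$ time and $o(n)$ bits, and we add only $O(1)$ time and $O(\log n)$ bits per node, preserving the bounds.

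The main obstacle is not the running time but the correctness of the reduction at the boundaries: getting the regrouping exactly right so that leaves, the root, and the separator edges are each counted the correct number of times, and pinning down the single correction term contributed by $\#$ under the circular-occurrence convention for $f_{T}$ (including degenerate ranges of $k$). Once the identity $\COMPLEXITY_k(T) = (n-k) - \sum_v (c(v)-1)[\,|\ell(v)|\ge k\,]$ is firmly established, the algorithm and its complexity follow immediately.
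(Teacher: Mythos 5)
Your proposal is correct and is essentially the paper's own proof: the identity $\COMPLEXITY_{k}(T)=(n-k)-\sum_{v}(c(v)-1)\,[\,|\ell(v)|\ge k\,]$ is exactly what the paper computes by initializing the counter to $n-k$ and, for each enumerated internal node of string depth at least $k$, adding one and subtracting its number of children (the length of $\mathtt{chars}$). Your derivation via counting edges crossing depth $k$ and telescoping the indicator over edge endpoints is just a more algebraic phrasing of the paper's add-at-$v$, subtract-at-parent argument, and yields the identical algorithm and bounds.
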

\begin{proof}
A $k$-mer of $T$ can either be the label of a node of $\ST_T$, or it could end in the middle of an edge $(u,v)$ of $\ST$. In the latter case, we assume that the $k$-mer is represented by its locus $v$, which might be a leaf. Let $\COMPLEXITY_{k}(T)$ be initialized to $n-k$, i.e. to the number of leaves that correspond to suffixes of $T$ of length at least $k+1$. We enumerate the internal nodes of $\ST$ using Theorem \ref{thm:enumerator}, and every time we enumerate a node $v$ we proceed as follows: if $|\ell(v)|<k$ we leave $\COMPLEXITY_{k}(T)$ unaltered, otherwise we increment $\COMPLEXITY_{k}(T)$ by one and we decrement $\COMPLEXITY_{k}(T)$ by the number of children of $v$ in $\ST$, which is the length of array $\mathtt{chars}$ in $\repr(\ell(v))$. In this way, every internal node $v$ of $\ST$ that is located at string depth at least $k$ and that is not the locus of a $k$-mer is both added to $\COMPLEXITY_{k}(T)$ (when the algorithm visits $v$) and subtracted from $\COMPLEXITY_{k}(T)$ (when the algorithm visits $\mathtt{parent}(v)$). Leaves at depth at least $k+1$ that are not the locus of a $k$-mer are added by the initialization of $\COMPLEXITY_{k}(T)$, and they are subtracted during the enumeration. Conversely, every locus $v$ of a $k$-mer of $T$ (including leaves) is just added to $\COMPLEXITY_{k}(T)$, since $|\ell(\mathtt{parent}(v))|<k$.
\end{proof}

We can apply the same telescoping strategy to compute $\kappa(\mathbf{T}^{1}_{k},\mathbf{T}^{2}_{k})$:

\begin{theorem}\label{thm:kmerKernel}
Let $T^1 \in [1..\sigma]^{n_1-1}\#_1$ and $T^2 \in [1..\sigma]^{n_2-1}\#_2$ be strings. Given an integer $k$ and two data structures that support $\mathtt{rangeDistinct}$ queries on $\BWT_{T^1}$ and on $\BWT_{T^2}$, respectively, we can compute $\kappa(\mathbf{T}^1_k,\mathbf{T}^2_k)$ in $O(nd)$ time and in $o(n)$ bits of space in addition to the input, where $n=n_1+n_2$.
\end{theorem}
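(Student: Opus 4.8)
The plan is to reduce the three scalars $N=\sum_{W}f_{T^1}(W)f_{T^2}(W)$, $D^1=\sum_{W}f_{T^1}(W)^2$ and $D^2=\sum_{W}f_{T^2}(W)^2$ (all sums ranging over $W\in[1..\sigma]^k$) to sums over the loci of $k$-mers in the generalized suffix tree of $T^1$ and $T^2$, and then to evaluate each of them by the telescoping argument already used in Theorem~\ref{thm:kmerComplexity}. The starting observation is that every $k$-mer $W$ has a unique locus $v$ in the GST, namely the node with $|\ell(\mathtt{parent}(v))|<k\le|\ell(v)|$ whose label has $W$ as a prefix, and that for each string $T^i$ we have $f_{T^i}(W)=f_{T^i}(\ell(v))$, which equals the number of leaves of $T^i$ below $v$, i.e. the size of $\INTERVAL{\ell(v)}$ in $\BWT_{T^i}$. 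Writing $f_i(v)$ for this per-string frequency, which is directly readable from $\repr'(v)$, each $k$-mer locus $v$ contributes $f_1(v)f_2(v)$ to $N$, $f_1(v)^2$ to $D^1$ and $f_2(v)^2$ to $D^2$, and then $\kappa(\mathbf{T}^1_k,\mathbf{T}^2_k)=N/\sqrt{D^1D^2}$.

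I would then enumerate the internal nodes of the GST with Theorem~\ref{thm:generalizedEnumerator}, maintaining the three accumulators simultaneously, and for each enumerated node $v$ with $|\ell(v)|\ge k$ add the value $g(v)$ associated with $v$ (one of $f_1(v)f_2(v)$, $f_1(v)^2$, $f_2(v)^2$) and subtract $\sum_{c}g(c)$ over the children $c$ of $v$. The key point is that $g(c)$ can already be computed at the parent $v$: each child corresponds to a right-extension character $b\in\mathtt{chars}_1\cup\mathtt{chars}_2$, and its per-string frequencies $f_1(vb),f_2(vb)$ are exactly the interval sizes delimited by the $\mathtt{first}_1$ and $\mathtt{first}_2$ arrays of $\repr'(v)$ (taken to be $0$ when $b$ is absent from that string); since $\INTERVAL{vb}=\INTERVAL{\ell(c)}$ in each BWT, the value subtracted for $c$ at $v$ equals the value later added for $c$ when $c$ is itself enumerated. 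As in Theorem~\ref{thm:kmerComplexity}, every internal node at depth $\ge k$ that is not a locus has a parent at depth $\ge k$, so it is added once and subtracted once and cancels, whereas every internal locus has a parent at depth $<k$ and is therefore added once and never subtracted; hence the internal part of each sum equals $\sum_{v}g(v)$ over internal $k$-mer loci.

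The remaining contribution comes from $k$-mers whose locus is a leaf. A $k$-mer with a leaf locus occurs exactly once, in the single string that owns that leaf, since the two strings carry distinct terminators $\#_1\neq\#_2$; thus a leaf locus of $T^1$ has $f_1(v)=1,f_2(v)=0$ and one of $T^2$ has $f_1(v)=0,f_2(v)=1$. Consequently leaf loci contribute nothing to $N$, so I initialize its accumulator to $0$, while they contribute $1$ each to $D^1$ (resp. $D^2$) for every leaf locus of $T^1$ (resp. $T^2$); I therefore initialize $D^1$ to $n_1-k$ and $D^2$ to $n_2-k$, the numbers of suffixes of $T^1$ and $T^2$ of length at least $k+1$ (equivalently, those whose length-$k$ prefix avoids the terminator). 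Exactly as in Theorem~\ref{thm:kmerComplexity}, the telescoping subtraction then removes from these counts the non-locus leaves, namely the $T^i$-leaves whose parent lies at depth $\ge k$ (for which $g(c)$ equals $1$ in the matching accumulator and $0$ in the other two), leaving precisely the leaf loci; the zero subtractions for $N$ are consistent with its zero initialization.

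Running this on top of Theorem~\ref{thm:generalizedEnumerator} costs $O(nd)$ time, since the per-node work of merging $\mathtt{chars}_1$ and $\mathtt{chars}_2$ and updating the three accumulators is proportional to the number of children of $v$, which sums to $O(n)$ over all nodes and is dominated by the enumeration cost; the three accumulators are nonnegative integers bounded by $n^2$ (by Cauchy–Schwarz $N\le\sqrt{D^1D^2}$), so they fit in $O(\log n)$ bits and the whole computation uses $o(n)$ bits on top of the enumerator, in a single pass over the two BWTs. A final division yields the kernel. I expect the main obstacle to be the boundary bookkeeping rather than the telescoping itself: one must verify that the per-string frequency read at a parent matches the one read at the child even when the child occurs in only one of the two strings, and that the leaf initialization ($0$ for $N$ and $n_i-k$ for $D^i$) together with the subtraction of non-locus leaves reproduces exactly the leaf contributions, taking due care of suffixes that contain a terminator within their first $k$ characters.
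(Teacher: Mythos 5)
Your proposal is correct and follows essentially the same route as the paper's proof: initialize $N=0$ and $D^i=n_i-k$ to account for leaves at depth at least $k+1$, then enumerate internal nodes of the generalized suffix tree via Theorem~\ref{thm:generalizedEnumerator} and apply the add-at-node, subtract-at-parent telescoping with frequencies read off the $\mathtt{first}_i$ arrays of $\repr'$. Your additional remarks (leaf loci contribute nothing to $N$ because the distinct terminators force $f_1(v)f_2(v)=0$, and the accumulators fit in $O(\log n)$ bits) are correct elaborations of details the paper leaves implicit.
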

\begin{proof}
Recall that $\kappa(\mathbf{T}_k^1,\mathbf{T}_k^2) = N/\sqrt{D^1 D^2}$, where $N=\sum_{W}\mathbf{T}_k^1[W]\mathbf{T}_k^2[W]$, $D^{i}=\sum_{W}\mathbf{T}_k^i[W]^2$, and $W \in [1..\sigma]^k$. We initially set $N=0$ and $D^i=n_i-k$, since these are the contributions of all the leaves at depth at least $k+1$ in the generalized suffix tree of $T^1$ and $T^2$. Then, we enumerate every internal node $u$ of the generalized suffix tree, using Theorem \ref{thm:generalizedEnumerator}: if $|\ell(u)|<k$  we keep all variables unchanged, otherwise we set $N$ to $N + f_{1}(\ell(u)) \cdot f_{2}(\ell(u)) - \sum_{v} f_{1}(\ell(v)) \cdot f_{2}(\ell(v))$ and we set $D^i$ to $D^i + f_{i}(\ell(u))^2 - \sum_{v}f_{i}(\ell(v))^2$, where $v$ ranges over all children of $u$ in the generalized suffix tree. Clearly $f_{i}(\ell(u)) = \mathtt{first}_{i}[k_i+1]-\mathtt{first}_{i}[1]$ where $k_i$ is the size of array $\mathtt{chars}_i$ in $\repr'(\ell(u))$, and $f_{i}(\ell(v)) = f_{i}(\ell(u)b_j) = \mathtt{first}_{i}[j+1]-\mathtt{first}_{i}[j]$ for some $j \in [1..k_i]$. In analogy to \Lemma{thm:kmerComplexity}, the contribution of the loci of the distinct $k$-mers of $T^1$, of $T^2$, or of both, is added to the three temporary variables and never subtracted, while the contribution of every other node $u$ at depth at least $k$ in the generalized suffix tree is both added (when the algorithm visits $u$, or when $N$ and $D^i$ are initialized) and subtracted (when the algorithm visits $\mathtt{parent}(u)$).
\end{proof}

An even more specific notion of compositional complexity is $\COMPLEXITY_{k,f}(T)$, the number of distinct $k$-mers that occur exactly $f$ times in $T$. In the \emph{$k$-mer profiling} problem \cite{chikhi2013informed,chor2009genomic} we are given a string $T$, an interval $[k_1..k_2]$ of lengths and an interval $[f_1..f_2]$ of frequencies, and we are asked to compute the matrix $\mathtt{profile}[k_1..k_2,f_1..f_2]$ defined as follows: $\mathtt{profile}[i,j] = \COMPLEXITY_{i,j}(T)$ if $j<f_2$, and $\mathtt{profile}[i,j] = \sum_{h \geq j}\COMPLEXITY_{i,h}(T)$ if $j=f_2$. Note that column $j$ of $\mathtt{profile}$ can have nonzero cells only if $f_j$ is the frequency of some internal node of $\ST_T$. In practice $\mathtt{profile}$ is often computed by running a $k$-mer extraction algorithm $k_2-k_1+1$ times, and by scanning the output of all such runs (see e.g. \cite{chikhi2013informed} and references therein). The following lemma shows that we can compute $\mathtt{profile}$ in just one pass over the BWT of the input string, and in linear time in the size of $\mathtt{profile}$:

\begin{theorem}
Let $T \in [1..\sigma]^{n-1}\#$ be a string. Given ranges $[k_1..k_2]$ and $[f_1..f_2]$, and given a data structure that supports $\mathtt{rangeDistinct}$ queries on $\BWT_T$, we can compute matrix $\mathtt{profile}[k_1..k_2,f_1..f_2]$ in $O(nd+(k_2-k_1)(f_2-f_1))$ time and in $o(n)$ bits of space in addition to the input and the output.
\end{theorem}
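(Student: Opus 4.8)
The plan is to extend the telescoping strategy of Theorem~\ref{thm:kmerComplexity} so that, instead of maintaining a single counter $\COMPLEXITY_k(T)$, we maintain the entire matrix $\mathtt{profile}$ and update its cells as we enumerate the nodes. The key observation is that $\COMPLEXITY_{i,j}(T)$ counts distinct $i$-mers whose locus has frequency exactly $f_j$ (or at least $f_j$ when $j=f_2$); each such $i$-mer is represented by a unique locus $v$ (a node or a leaf, with the $i$-mer ending in the middle of the edge into $v$), and $f_T(\ell(v)) = \mathtt{first}[k+1]-\mathtt{first}[1]$ is directly readable from $\repr(\ell(v))$. So every locus contributes to exactly one row $i$ of the profile for each length $i \in [k_1..k_2]$ it supports, and the task is to route that contribution to the correct frequency column.

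First I would handle the telescoping over lengths. For each enumerated internal node $v$ with $|\ell(v)|=L$ and frequency $f=f_T(\ell(v))$, the string depths $i$ for which $v$ is a genuine locus of an $i$-mer are exactly those $i \le L$ such that no proper descendant shares that $i$-mer's locus --- equivalently, $v$ is the locus of the $i$-mer for $i \in (|\ell(\mathtt{parent}(v))|, L]$. Mirroring Theorem~\ref{thm:kmerComplexity}, I would, upon visiting $v$, \emph{add} a unit to every row $i \in [\max(k_1,1)..\min(k_2,L)]$ in the column for frequency $f$, and upon processing $v$ as a parent, \emph{subtract} a unit in those same rows from the column of each child's frequency --- but since each child $w$ has its own frequency $f_T(\ell(w))$, the subtraction for child $w$ affects rows $i \in [\max(k_1,1)..\min(k_2,L)]$ (those $i$ for which $w$ is \emph{not} itself the locus, because the $i$-mer ending inside $(v,w)$ has $v$ as an ancestor but its locus is $w$). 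Concretely, an internal node at depth $L$ that is \emph{not} the locus of its $i$-mer gets its spurious $+1$ (added when it was visited) cancelled by a $-1$ contributed when its parent is visited, exactly as in the scalar case, but now the cancellation must land in the same frequency column; this is automatic because the node's frequency is fixed, so both the add and the subtract reference the identical $(i,f)$ cell.

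The leaf contributions are handled by the initialization, as before: a leaf is the locus of an $i$-mer for every $i$ up to its string depth, and every leaf has frequency $1$, so I would initialize column $f_1$ (assuming $f_1=1$; otherwise the leaves land in that column only if $1 \ge f_1$, and are discarded if $f_1>1$) by setting $\mathtt{profile}[i,\cdot]$ to account for the $n-i$ suffixes of length $\ge i+1$, then letting the enumeration subtract leaves that are not loci. Because the enumeration already visits internal nodes and returns each child's representation, no additional queries are needed: the per-node work is $O(\sigma)$ to read the children plus $O(k_2-k_1)$ to update a contiguous range of rows in two columns. To map a raw frequency $f$ to its column index $j$ in $O(1)$ I would precompute, in a small $o(n)$-bit table indexed over the $O(n)$ distinct possible frequencies, the column $j$ with $f_j \le f$ minimal (clamping to $f_2$ when $f \ge f_{f_2}$); only frequencies in $[f_1..f_2]$ need storage, so this table fits in the output bound.

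The main obstacle I anticipate is bounding the extra time by $O\bigl(nd + (k_2-k_1)(f_2-f_1)\bigr)$ rather than $O\bigl(nd \cdot (k_2-k_1)\bigr)$: a naive per-node range update over $[k_1..k_2]$ would multiply the node count by the number of rows. The fix is a difference-array (prefix-sum) encoding: instead of updating every row $i$ in a range $[\max(k_1,1)..\min(k_2,L)]$ directly, I would record, per column, an increment at row $\max(k_1,1)$ and a matching decrement at row $\min(k_2,L)+1$, so each node touches $O(1)$ cells per column. Since each node induces $O(\sigma)$ such constant-size updates (one per child, plus one for itself) charged to Weiner links, the total update cost is $O(nd)$; a single final pass taking prefix sums down each of the $f_2-f_1+1$ columns over the $k_2-k_1+1$ rows reconstructs $\mathtt{profile}$ in $O\bigl((k_2-k_1)(f_2-f_1)\bigr)$ time, giving the claimed bound while keeping the working space inside the output-plus-$o(n)$ budget.
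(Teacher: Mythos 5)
Your proposal is correct and takes essentially the same route as the paper: the same telescoping of $+1$ for a node and $-1$ per qualifying child, routed to the column $\min\{f,f_2\}$, with the per-node range update over rows $[k_1..\min\{|\ell(u)|,k_2\}]$ batched into $O(1)$ cell touches and recovered by a final cumulative-sum pass over each column, plus the $n-i$ leaf correction to the frequency-one column. The only cosmetic difference is that the paper records a single update at row $\min\{|\ell(u)|,k_2\}$ and accumulates upward from $k_2$ to $k_1$ (exploiting that the row range always starts at $k_1$), whereas you record both endpoints of a difference array and sum downward; your auxiliary frequency-to-column lookup table is unnecessary since the column is simply $\min\{f,f_2\}$.
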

\begin{proof}
We use Theorem \ref{thm:enumerator} again. Assume that, for every internal node $u$ of $\ST_T$ with string depth at least $k_1$ and with frequency at least $f_1$, and for every $k \in [k_1..\min\{|\ell(v)|,k_2\}]$, we increment $\mathtt{profile}[k,\min\{f(u),f_2\}]$ by one and we decrement $\mathtt{profile}[k,\min\{f(v),f_2\}]$ by one for every child $v$ of $u$ in $\ST$ such that $f(v) \geq f_1$. This would take $O(n^2)$ total updates to $\mathtt{profile}$. However, we can perform all of these updates in batch, as follows: for every node $u$ of $\ST$ with $f(u) \geq f_1$ and with $|\ell(u)| \geq k_1$, we just increment $\mathtt{profile}[\min\{|\ell(u)|,k_2\}$, $\min\{f(u),f_2\}]$ by one, and we just decrement $\mathtt{profile}[\min\{|\ell(u)|,k_2\}$, $\min\{f(v),f_2\}]$ by one for every child $v$ of $u$ in $\ST$ such that $f(v) \geq f_1$. After having traversed all the internal nodes of $\ST$, we scan $\mathtt{profile}$ as follows: for every $j \in [f_1..f_2]$, we traverse all values of $i$ in the decreasing order $k_2-1,\dots,k_1$, and we set $\mathtt{profile}[i,j]=\mathtt{profile}[i,j]+\mathtt{profile}[i+1,j]$. If $f_1=1$, at the end of this process the first column of $\mathtt{profile}$ contains negative numbers, since Theorem \ref{thm:enumerator} does not enumerate the leaves of $\ST$. Thus, before returning, we add to $\mathtt{profile}[i,1]$ the number of leaves with string depth at least $k_i+1$, i.e. value $n-k_i$, for all $i \in [k_1..k_2]$.
\end{proof}

A similar algorithm allows computing $\kappa(\mathbf{T}^1_k,\mathbf{T}^2_k)$ for all $k$ in a user-specified range $[k_1..k_2]$ in $O(nd+k_2-k_1)$ time. Matrix $\mathtt{profile}$ can be used to determine a range of values of $k$ to be used in $k$-mer kernels. The smallest number in this range is typically the value of $k$ that maximizes the number of distinct $k$-mers that occur at least twice in $T$ \cite{sims2009alignment}. The largest number in the range is typically determined using some measure of expectation: we cover this computation in Section \ref{sec:markovianCorrection}.

A related notion of compositional complexity is the \emph{$k$-th order empirical entropy} of $T$, defined as $\ENTROPY_{k}(T) = (1/|T|) \cdot \sum_{W}\sum_{a \in \Sigma^{r}(W)} f_{T}(Wa) \cdot \log(f_{T}(W)/f_{T}(Wa))$, where $W$ ranges over all strings in $[1..\sigma]^k$. Clearly only the internal nodes of $\ST_T$ contribute to some $\ENTROPY_{k}(T)$ \cite{gog2011compressed}, thus our methods allow computing $\ENTROPY_{k}(T)$ for a user-specified \emph{range of lengths} $[k_1..k_2]$ in $O(nd)$ time, using just one pass over $\BWT_T$.


\section{Kernels and complexity measures on all substrings}

Given a string $T \in [1..\sigma]^{n-1}\#$, consider the infinite-dimensional vector $\mathbf{T}_{\infty}$ indexed by all distinct substrings $W \in [1..\sigma]^+$, such that $\mathbf{T}_{\infty}[W]=f_{T}(W)$. The \emph{substring complexity} $\COMPLEXITY_{\infty}(T)$ of $T$ is the number of nonzero components of $\mathbf{T}_{\infty}$. The \emph{substring kernel} of two strings $T^1$ and $T^2$ is the cosine of composition vectors $\mathbf{T}_{\infty}^1$ and $\mathbf{T}_{\infty}^2$. Computing substring complexity and substring kernel amounts to applying the same telescoping strategy described in Theorem \ref{thm:kmerComplexity} and \ref{thm:kmerKernel}, but with different contributions: 

\begin{corollary} \label{cor:substringComplexity}
Let $T \in [1..\sigma]^{n-1}\#$ be a string. Given a data structure that supports $\mathtt{rangeDistinct}$ queries on $\BWT_T$, we can compute $\COMPLEXITY_{\infty}(T)$ in $O(nd)$ time and in $o(n)$ bits of space in addition to the input.
\end{corollary}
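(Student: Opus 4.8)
The plan is to reuse the telescoping strategy of Theorem~\ref{thm:kmerComplexity} but to accumulate its contributions over all relevant lengths $k$ simultaneously. The starting observation is that every distinct substring $W \in [1..\sigma]^+$ occurring in $T$ has a well-defined length, so that $\COMPLEXITY_{\infty}(T) = \sum_{k \geq 1} \COMPLEXITY_{k}(T)$; moreover no substring over $[1..\sigma]$ can be longer than $n-1$, so the sum ranges over $k \in [1..n-1]$. This reduces the corollary to the theorem I already have, applied once conceptually for every value of $k$ but implemented in a single pass.

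Next I would extract the closed form implied by the proof of Theorem~\ref{thm:kmerComplexity}: after initialising to $n-k$ and running the enumerator, one has $\COMPLEXITY_{k}(T) = (n-k) + \sum_{u}(1 - c(u))$, where $u$ ranges over the internal nodes of $\ST_T$ with $|\ell(u)| \geq k$ and $c(u)$ is the number of children of $u$, i.e.\ the length of array $\mathtt{chars}$ in $\repr(\ell(u))$. Summing over $k \in [1..n-1]$ and exchanging the order of summation, each internal node $u$ contributes the term $(1-c(u))$ for exactly the $|\ell(u)|$ lengths $k \in [1..|\ell(u)|]$ (using $|\ell(u)| \leq n-1$), while the leaf initialisations telescope into $\sum_{k=1}^{n-1}(n-k) = \binom{n}{2}$. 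This yields the single closed form
\begin{equation}
\COMPLEXITY_{\infty}(T) = \binom{n}{2} + \sum_{u} |\ell(u)| \cdot (1 - c(u)),
\end{equation}
where $u$ ranges over all internal nodes of $\ST_T$ (the root contributes $0$ since $|\ell(\mathrm{root})|=0$). The algorithm is then immediate: initialise an accumulator to $\binom{n}{2}$, enumerate the right-maximal substrings $W$ via Theorem~\ref{thm:enumerator}, and for each one add $|W|$ and subtract $|W|$ times the length of $\mathtt{chars}$ in $\repr(W)$. Both numbers are available in $O(1)$ time per node, so the computation inherits the $O(nd)$ time and $o(n)$ bits of additional space of Theorem~\ref{thm:enumerator}, plus a single $O(\log n)$-bit counter.

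The only contribution that differs from the $k$-mer case is thus the weighting by string depth: instead of ``$+1$, and $-1$ per child'' we use ``$+|\ell(u)|$, and $-|\ell(u)|$ per child'', and the leaf term becomes $\binom{n}{2}$ in place of $n-k$. It is worth noting that this agrees with the classical edge-length view, in which each distinct substring is charged to the node just below its locus, so that $\COMPLEXITY_{\infty}(T)=\sum_{v}(|\ell(v)|-|\ell(\mathtt{parent}(v))|)$; regrouping this sum by parent reproduces the same formula, with $\binom{n}{2}$ accounting for the $[1..\sigma]$-prefixes of the $n$ leaves.

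I expect the only delicate point to be the treatment of the separator. Since $\COMPLEXITY_{\infty}(T)$ counts substrings over $[1..\sigma]$ only, every substring containing $\#$ must be excluded, and this is precisely what keeps the leaf term equal to $\binom{n}{2}$ rather than a larger value: the leaf for suffix $T[i..n]$ should contribute only its $[1..\sigma]$-prefix of length $n-i$ (dropping the trailing $\#$), and $\sum_{i}(n-i)=\binom{n}{2}$; equivalently, the leaf whose suffix has length exactly $k$ is intentionally omitted from the initialisation of $\COMPLEXITY_{k}(T)$ because its length-$k$ prefix ends in $\#$. Internal-node labels never contain $\#$ (they are repeats, whereas $\#$ occurs once), so no further correction is needed there. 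I would sanity-check the formula on a small instance such as $T=aba\#$, where it gives $\binom{4}{2}+1\cdot(1-2)=5$, matching the substrings $a,b,ab,ba,aba$, before committing to the general argument.
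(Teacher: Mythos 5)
Your proof is correct and takes essentially the same approach as the paper: the paper's proof initialises $\COMPLEXITY_{\infty}(T)$ to $(n-1)n/2$ and, at each internal node $v$, adds $|\ell(v)|$ and subtracts $|\ell(v)|\cdot|\mathtt{children}(v)|$, which is exactly your closed form $\binom{n}{2}+\sum_{u}|\ell(u)|\cdot(1-c(u))$. Your derivation by summing $\COMPLEXITY_{k}(T)$ over $k$ and the paper's direct edge-length argument are two readings of the same telescoping sum, and your treatment of the separator $\#$ correctly justifies the $\binom{n}{2}$ initialisation.
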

\begin{proof}
The substring complexity of $T$ coincides with the number of characters in $[1..\sigma]$ that occur on all edges of $\ST_{T}$. We can thus proceed as in Lemma \ref{thm:kmerComplexity}, initializing $\COMPLEXITY_{\infty}(T)$ to $(n-1)n/2$, or equivalently to the sum of the lengths of all suffixes of $T[1..n-1]$. Whenever we visit a node $v$ of $\ST$, we add to $\COMPLEXITY_{\infty}(T)$ the quantity $|\ell(v)|$, and we subtract from $\COMPLEXITY_{\infty}(T)$ the quantity $|\ell(v)| \cdot |\mathtt{children}(v)|$. The net effect of all such operations coincides with summing the lengths of all edges of $\ST$, discarding all occurrences of character $\#$. Note that $|\ell(u)|$ is provided by Theorem \ref{thm:enumerator}, and $|\mathtt{children}(v)|$ is the size of array $\mathtt{chars}$ in $\repr(\ell(v))$.
\end{proof}

\begin{corollary} \label{cor:substringKernel}
Let $T^1 \in [1..\sigma]^{n_1-1}\#_1$ and $T^2 \in [1..\sigma]^{n_2-1}\#_2$ be strings. Given data structures that support $\mathtt{rangeDistinct}$ queries on $\BWT_{T^1}$ and on $\BWT_{T^2}$, respectively, we can compute $\kappa(\mathbf{T}_{\infty}^1,\mathbf{T}_{\infty}^2)$ in $O(nd)$ time and in $o(n)$ bits of space in addition to the input, where $n=n_1+n_2$.
\end{corollary}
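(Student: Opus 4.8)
The plan is to mirror the telescoping strategy of Theorem~\ref{thm:kmerKernel} exactly, but replacing the $k$-mer contributions with the substring contributions established in Corollary~\ref{cor:substringComplexity}. Recall that $\kappa(\mathbf{T}_{\infty}^1,\mathbf{T}_{\infty}^2) = N/\sqrt{D^1 D^2}$, where now $N=\sum_{W}f_1(W)f_2(W)$ and $D^i=\sum_{W}f_i(W)^2$, with $W$ ranging over \emph{all} distinct substrings in $[1..\sigma]^+$. The key observation, carried over from Corollary~\ref{cor:substringComplexity}, is that each substring $W$ is counted along the edge of the generalized suffix tree on which its locus falls, and that all substrings sharing a given locus $v$ (i.e. ending in the middle of the edge entering $v$, or exactly at $v$) have the same frequency pair $(f_1(\ell(v)),f_2(\ell(v)))$ as the node $v$ itself.

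First I would enumerate every internal node $u$ of the generalized suffix tree of $T^1$ and $T^2$ using Theorem~\ref{thm:generalizedEnumerator}, which supplies $|\ell(u)|$, the frequencies $f_i(\ell(u))=\mathtt{first}_i[k_i+1]-\mathtt{first}_i[1]$, and the per-child frequencies $f_i(\ell(u)b_j)=\mathtt{first}_i[j+1]-\mathtt{first}_i[j]$. For each enumerated node $u$ I would add $|\ell(u)|\cdot f_1(\ell(u))f_2(\ell(u))$ to $N$ and $|\ell(u)|\cdot f_i(\ell(u))^2$ to $D^i$, and then subtract, for every child $v$ of $u$, the quantities $|\ell(u)|\cdot f_1(\ell(v))f_2(\ell(v))$ from $N$ and $|\ell(u)|\cdot f_i(\ell(v))^2$ from $D^i$. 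The weight $|\ell(u)|$ here plays the role of the edge-length factor from Corollary~\ref{cor:substringComplexity}: the net contribution of any proper substring $W$ whose locus is an internal node $v$ is its frequency product weighted once per character of the edge entering $v$, because it is added when $u=\mathtt{parent}(v)$ visits $v$ as a child with weight $|\ell(\mathtt{parent}(v))|$ and re-added when $v$ itself is visited, the difference telescoping to exactly $|\ell(v)|-|\ell(\mathtt{parent}(v))|$ copies.

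For the leaf contributions I would initialize $N$ and $D^i$ before the enumeration, just as in Theorem~\ref{thm:kmerKernel} and Corollary~\ref{cor:substringComplexity}, to account for the substrings whose locus is a leaf; these are the suffixes of $T^1[1..n_1-1]$ and of $T^2[1..n_2-1]$, each occurring once in its own string, so they contribute only to the respective $D^i$ (via the sum of suffix lengths, as in Corollary~\ref{cor:substringComplexity}) and contribute nothing to $N$ since no substring realized only as a full suffix of one string is shared at a leaf locus. Because every update is charged to a node or an explicit Weiner link, the running time is $O(nd)$, and the working space beyond the input and output is dominated by the stack of Theorem~\ref{thm:generalizedEnumerator}, which is $o(n)$ bits.

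The step I expect to require the most care is the bookkeeping of the leaf initialization and the precise accounting of occurrences of the separators $\#_1$ and $\#_2$: I must ensure that substrings ending in a separator are excluded (exactly as the ``discarding all occurrences of character $\#$'' clause of Corollary~\ref{cor:substringComplexity} does), and that a substring occurring in both $T^1$ and $T^2$ is attributed the correct shared frequency product in $N$ rather than being double-counted or split across the two leaf sets. Verifying that the telescoping identity still holds under the edge-weight factor $|\ell(u)|$ for the \emph{cross} term $N$, where the two frequencies come from different strings and a substring may be a leaf in one string while internal in the other, is the one place where the argument differs nontrivially from the single-string case of Corollary~\ref{cor:substringComplexity}.
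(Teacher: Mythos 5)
Your proposal is correct and follows essentially the same route as the paper: enumerate the internal nodes of the generalized suffix tree via Theorem~\ref{thm:generalizedEnumerator}, initialize $N=0$ and $D^i=(n_i-1)n_i/2$, and apply the telescoping update $N \leftarrow N + |\ell(u)|\cdot(f_1(\ell(u))f_2(\ell(u)) - \sum_v f_1(\ell(v))f_2(\ell(v)))$ and its analogue for $D^i$. Your observation that leaf-locus substrings occur in only one string and hence contribute nothing to $N$ correctly resolves the one point you flagged as delicate.
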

\begin{proof}
We proceed as in Theorem \ref{thm:kmerKernel}, setting again $N=0$ and $D^i=(n_{i}-1)n_{i}/2$ at the beginning of the algorithm. When we visit a node $u$ of the generalized suffix tree of $T^1$ and $T^2$, we set $N$ to $N + |\ell(u)| \cdot ( f_{1}(\ell(u)) f_{2}(\ell(u)) - \sum_{v}f_{1}(\ell(v)) f_{2}(\ell(v)) )$ and we set $D^i$ to $D^i + |\ell(u)| \cdot ( f_{i}(\ell(u))^2 - \sum_{v} f_{i}(\ell(v))^2 )$, where $v$ ranges over all children of $u$ in the generalized suffix tree.
\end{proof}

In a substring kernel it is common to weight a substring $W$ by a user-specified function of its length: typical choices are $\epsilon^{|W|}$ for a given constant $\epsilon$, or indicators that select only substrings within a specific range of lengths \cite{smola2003fast}. We denote by $\mathbf{T}_{\infty,g}^i$ a weighted version of the infinite-dimensional vector $\mathbf{T}_{\infty}^i$ in which $\mathbf{T}_{\infty}^{i}[W] = g(|W|) \cdot f_{T^i}(W)$ and where $g$ is any user-specified function. We assume that the number of bits required to represent the output of $g$ with sufficient precision is $O(\log{n})$. It is easy to adapt Corollary \ref{cor:substringKernel} to support this type of composition vector:

\begin{corollary}\label{cor:weightedSubstringKernel}
Let $T^1 \in [1..\sigma]^{n_1-1}\#_1$ and $T^2 \in [1..\sigma]^{n_2-1}\#_2$ be strings. Given a function $g(k)$ that can be evaluated in constant time, and given data structures that support $\mathtt{rangeDistinct}$ queries on $\BWT_{T^1}$ and on $\BWT_{T^2}$, respectively, we can compute $\kappa(\mathbf{T}_{\infty,g}^1,\mathbf{T}_{\infty,g}^2)$ in $O(nd)$ time and in $o(n)$ bits of space in addition to the input, where $n=n_1+n_2$.
\end{corollary}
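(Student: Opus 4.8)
The plan is to reuse the telescoping of Corollary~\ref{cor:substringKernel} essentially verbatim, changing only the multiplier attached to each node and the values used to initialize $D^1$ and $D^2$. Recall that in the generalized suffix tree of $T^1$ and $T^2$ every substring $W$ whose locus lies on an edge $(u,v)$ shares the same frequencies $f_i(W)=f_i(\ell(v))$, and that these substrings are exactly those of lengths $|\ell(u)|+1,\dots,|\ell(v)|$. Writing $G(k)=\sum_{j=1}^{k} g(j)^2$, the contribution of the whole edge $(u,v)$ to $D^i=\sum_{W} g(|W|)^2 f_i(W)^2$ is therefore $\bigl(G(|\ell(v)|)-G(|\ell(u)|)\bigr)\,f_i(\ell(v))^2$, and its contribution to $N=\sum_{W} g(|W|)^2 f_1(W)f_2(W)$ is $\bigl(G(|\ell(v)|)-G(|\ell(u)|)\bigr)\,f_1(\ell(v))f_2(\ell(v))$. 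This is exactly the formula of Corollary~\ref{cor:substringKernel} with the length factor $|\ell(v)|-|\ell(u)|$ replaced by the $g^2$-weighted length $G(|\ell(v)|)-G(|\ell(u)|)$.

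First I would rewrite these edge contributions in telescoped form so that each summand is charged to a single enumerated node: the term $G(|\ell(v)|)f_i(\ell(v))^2$ is charged to $v$ and the term $-G(|\ell(u)|)f_i(\ell(v))^2$ to $u=\mathtt{parent}(v)$. Visiting an internal node $u$ then triggers the update $D^i \leftarrow D^i + G(|\ell(u)|)\bigl(f_i(\ell(u))^2-\sum_{v}f_i(\ell(v))^2\bigr)$, and symmetrically for $N$, where $v$ ranges over the children of $u$. The key point is that the only $G$-value needed at $u$ is the single number $G(|\ell(u)|)$: the children enter only through the frequencies already supplied by $\repr'(\ell(u))$, never through their own $G$-values.

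Next I would handle the leaves, which Theorem~\ref{thm:generalizedEnumerator} does not enumerate. Any substring whose locus is a leaf occurs exactly once in $T^1T^2$, hence in exactly one of the two strings, so it contributes nothing to $N$ and I keep the initialization $N=0$. For $D^i$, the missing $+G(|\ell(v)|)f_i(\ell(v))^2$ terms of the $T^i$-leaves sum to $\sum_{k=1}^{n_i-1} G(k)$ over the lengths of the suffixes of $T^i$, which collapses to the unweighted value $(n_i-1)n_i/2$ when $g\equiv 1$. I would precompute this in a single $O(n_i)$ scan via the recurrence $G(k)=G(k-1)+g(k)^2$, storing only the running value of $G$ and the running sum in $O(\log n)$ bits.

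The hard part will be evaluating $G(|\ell(u)|)$ at every enumerated node within the $o(n)$-bit budget: tabulating $G(1),\dots,G(n)$ would cost $\Theta(n\log n)$ bits, while recomputing each prefix sum from scratch would cost $\Theta(n^2)$ time. For the weight functions used in practice, such as $g(k)=\epsilon^{k}$ or the indicator of a length window, $G$ admits a closed form evaluable in $O(1)$ time and $O(\log n)$ bits, and the claim is then immediate. For an arbitrary $g$ I would instead maintain $G(|\ell(u)|)$ incrementally. Since following a suffix link decreases the string depth by exactly one, string depth coincides with depth in $\SLT$, and since the enumeration of Theorem~\ref{thm:generalizedEnumerator} pops nodes in depth-first order, the string depths of consecutively popped nodes form a subsequence of the depth sequence of an Euler tour of $\SLT$; the latter has total variation $O(n)$, hence so does the former. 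Keeping a current depth together with the matching value of $G$ and moving it by $\pm 1$ steps (adding or subtracting one $g(j)^2$ per step) to reach $|\ell(u)|$ whenever $u$ is popped therefore costs $O(n)$ time and $O(\log n)$ bits in total. Combined with the telescoping updates, which add only $O(1)$ work per node and per child on top of Corollary~\ref{cor:substringKernel}, this preserves the $O(nd)$ running time and the $o(n)$-bit space bound.
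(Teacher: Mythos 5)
Your proof is correct and shares the paper's core idea: rerun the telescoping of Corollary~\ref{cor:substringKernel} with the edge length $|\ell(v)|-|\ell(u)|$ replaced by $G(|\ell(v)|)-G(|\ell(u)|)$, where $G(k)=\sum_{j=1}^{k}g(j)^2$. Where you genuinely diverge is in how $G(|\ell(u)|)$ is made available at each enumerated node. The paper simply pushes $\mathtt{prefixSum}(aW)=G(|aW|)$ onto the stack next to $\repr'(aW)$, updating it by one term per Weiner link via $\mathtt{prefixSum}(aW)=\mathtt{prefixSum}(W)+g(|W|+1)^2$; this is a purely local update, adds $O(\log n)$ bits per stack entry (absorbed into the existing $o(n)$ stack bound), and works for any order in which children are pushed. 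You instead keep a single running value of $G$ and amortize its movement against the total variation of the string-depth sequence of popped nodes, which you correctly bound by $O(n)$ via the Euler tour of the suffix-link tree (equivalently: each pop raises the depth by at most one, so the total upward variation, and hence the downward variation, is at most the number of pops). That argument is sound, but it leans on the enumeration being a depth-first traversal of the suffix-link tree --- a property of the concrete algorithm of Section~\ref{sec:enumerator} rather than of the black-box statement of Theorem~\ref{thm:generalizedEnumerator}; the paper's stack augmentation needs the same access to the internals but is insensitive to traversal order, so it is the more robust of the two. What your route buys is that nothing extra is stored on the stack and the mechanism extends to any $g$ evaluable in $O(1)$ time without closed-form prefix sums, which is also true of the paper's method. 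One point in your favor: your initialization $D^i=\sum_{k=1}^{n_i-1}G(k)$ is the correct generalization of $(n_i-1)n_i/2$, whereas the paper's stated $\sum_{j=1}^{n_i-1}g(j)^2$ reduces to $n_i-1$ when $g\equiv 1$ and appears to be a typo for the double sum; your observation that leaves contribute nothing to $N$ likewise matches Corollary~\ref{cor:substringKernel}.
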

\begin{proof}
We modify Corollary \ref{cor:substringKernel} as follows. Assume that we are processing an internal node $v$ of the generalized suffix tree, let $\ell(v)=W$, and assume that we have computed $\repr'(aW)$ for all the left extensions $aW$ of $W$. In addition to pushing $\repr'(aW)$ onto the stack, we also push value $\mathtt{prefixSum}(aW)=\sum_{i=1}^{|W|+1}g(i)^2$ with it, where $\mathtt{prefixSum}(aW) = \mathtt{prefixSum}(W)+g(|W|+1)^2$. When we pop $\repr'(aW)$, we compute its contributions to $N$ and $D^i$ as described in Corollary \ref{cor:substringKernel}, but replacing $|aW|$ by $\mathtt{prefixSum}(aW)$. We initialize $D^i$ to $\sum_{j=1}^{n_{i}-1}g(j)^2$.
\end{proof}

Corollary \ref{cor:weightedSubstringKernel} can clearly support distinct weight functions for $T^1$ and $T^2$. For some functions, like $\epsilon^{|W|}$, prefix sums can be computed in closed form \cite{smola2003fast}, thus there is no need to push $\mathtt{prefixSum}$ values on the stack. Another frequent weighting scheme for a string $W$ associates a score $q(c)$ to every character $c$ of $W$, and it weights $W$ by e.g. $q(W)=\prod_{i=1}^{|W|}q(W[i])$. In this case we could just push $\mathtt{prefixSum}(V) = \sum_{i=1}^{|V|} \prod_{j=1}^{i}q(V[j])^2$ onto the stack, where $V=aW$ and $\mathtt{prefixSum}(V) = q(a)^2 \cdot (1 + \mathtt{prefixSum}(W))$. A similar weighting scheme can be used for $k$-mers as well. Let $\mathbf{T}_{k,q}$ be a version of $\mathbf{T}_k$ such that $\mathbf{T}_{k,q}[W]=f_{T}(W)-(|T|-|W|)q(W)$ for every $W \in [1..\sigma]^k$, and consider the following distances defined in \cite{reinert2009alignment}:
\begin{eqnarray*}
D_{2}^{s}(\mathbf{T}^1_{k,q},\mathbf{T}^2_{k,q}) & = & \sum_{W} \mathbf{T}^1_{k,q}[W]\mathbf{T}^2_{k,q}[W] / \sqrt{(\mathbf{T}^1_{k,q}[W])^2 + (\mathbf{T}^2_{k,q}[W])^2} \\
D_{2}^{*}(\mathbf{T}^1_{k,q},\mathbf{T}^2_{k,q}) & = & \sum_{W} \mathbf{T}^1_{k,q}[W]\mathbf{T}^2_{k,q}[W] / \left( \sqrt{(n_1-k)(n_2-k)} \cdot q(W) \right)
\end{eqnarray*}
where $W$ ranges over all strings in $[1..\sigma]^k$. We can compute such distances using just a minor modification to Theorem \ref{thm:kmerKernel}:

\begin{corollary}
Let $T^1 \in [1..\sigma]^{n_1-1}\#_1$ and $T^2 \in [1..\sigma]^{n_2-1}\#_2$ be strings. Given an integer $k$ and data structures that support $\mathtt{rangeDistinct}$ queries on $\BWT_{T^1}$ and on $\BWT_{T^2}$, respectively, we can compute $D_{2}^{s}(\mathbf{T}^1_{k,p},\mathbf{T}^2_{k,p})$ and $D_{2}^{*}(\mathbf{T}^1_{k,p},\mathbf{T}^2_{k,p})$ in $O(nd)$ time and in $\lambda\log{\sigma}+o(n)$ bits of space in addition to the input, where $n=n_1+n_2$ and $\lambda$ is the length of the longest repeat in $T^{1}T^2$.
\end{corollary}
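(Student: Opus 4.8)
The plan is to reduce each statistic to a constant plus a sum over the distinct $k$-mers that occur in $T^1T^2$, i.e.\ over the loci of the depth-$k$ cut of the generalized suffix tree, and then to evaluate that sum during the enumeration of Theorem~\ref{thm:generalizedEnumerator}. Writing $p=p(W)$ and expanding $\mathbf{T}^i_{k,p}[W]=f_i(W)-(n_i-k)p$, and using that $p$ is a product (i.i.d.) background so that $\sum_{W\in[1..\sigma]^k}p(W)=(\sum_c p(c))^k=1$, the cross terms of $D_2^*$ collapse: summing $\mathbf{T}^1\mathbf{T}^2/(\sqrt{(n_1-k)(n_2-k)}\,p)$ over all $W$ leaves only the data-dependent quantity $S=\sum_W f_1(W)f_2(W)/p(W)$ plus the closed-form constant $-\sqrt{(n_1-k)(n_2-k)}$. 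For $D_2^s$ the same bookkeeping gives $D_2^s=c+\sum_{W\text{ occurs}}g(W)$, where $g(W)=T(W)-c\,p(W)$ is the exact summand, $T(W)$ is the $D_2^s$ term, and $c=(n_1-k)(n_2-k)/\sqrt{(n_1-k)^2+(n_2-k)^2}$ is the value of a term with $f_1=f_2=0$ after dividing out $p$ (so that the non-occurring $k$-mers sum to $c\sum_W p(W)=c$). I also write $g^{(i)}(W)$ for $g(W)$ evaluated with $(f_1,f_2)$ replaced by the unit vector $e_i$.

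The one new ingredient both cases need is $p(\ell(v)[1..k])$, the product of the character probabilities over the \emph{first} $k$ symbols of a node's label. Since the enumerator descends the suffix-link tree by prepending characters (left extensions) and suffix-link-tree depth equals string length, the first $k$ symbols of $\ell(v)$ are exactly the $k$ most recently prepended ones. I would therefore maintain the current root-to-node path of prepended characters, which has length at most $\lambda$ (the longest repeat), storing it in $\lambda\log\sigma$ bits; this gives $O(1)$ sliding-window maintenance of $\mathtt{pk}(v):=p(\ell(v)[1..k])$ as the depth changes by one, namely $\mathtt{pk}(av')=p(a)\cdot\mathtt{pk}(v')/p(\ell(v')[k])$ once the depth exceeds $k$, reading the dropped symbol from the path array. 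To survive the non-local jumps of the DFS I would store $\mathtt{pk}$ together with each stack entry and keep the path array consistent by the usual invariant (a popped entry at depth $m$ finds its parent's path untouched in positions $1..m-1$).

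For $D_2^*$ this is then a minor modification of Theorem~\ref{thm:kmerKernel}: a node $u$ with $|\ell(u)|\ge k$ and any suffix-tree child share their first $k$ symbols, hence the same $\mathtt{pk}$, so $1/\mathtt{pk}(u)$ factors out of the telescoping step $f_1(\ell(u))f_2(\ell(u))-\sum_v f_1(\ell(v))f_2(\ell(v))$, and leaves contribute nothing to $S$ because $f_1f_2=0$ on any leaf; thus $S$ is computed exactly in one pass. For $D_2^s$ the square root in the denominator destroys telescoping of the summand itself, so the term must be evaluated once per distinct occurring $k$-mer, and I would split these by frequency. Frequency-$\ge2$ $k$-mers have internal loci, and for any function $h$ one has $\sum_{u:|\ell(u)|\ge k}\big(h(u)-\sum_{v\text{ internal child}}h(v)\big)=\sum_{\text{internal loci}}h$ (restricting the subtraction to internal children, detected by $f(\ell(u)b)\ge2$, so leaves never corrupt the cancellation); evaluating $h$ from each node's own $f_1,f_2,\mathtt{pk}$ gives, at a locus, exactly the intended value, and I run two such accumulators, $M_1=\sum_{\text{internal loci}}g(W)$ and $M_2=\sum_{\text{internal loci}}\big(f_1(W)g^{(1)}(W)+f_2(W)g^{(2)}(W)\big)$.

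The hard part is the frequency-$1$ $k$-mers, whose loci are leaves: when the edge crossing depth $k$ starts below depth $k-1$, the middle symbols of the $k$-mer lie \emph{inside} a suffix-tree edge and are never exposed by the enumerator, so $\mathtt{pk}$ cannot be read from the path. I would sidestep this by computing the aggregate $A=\sum_{W}\big(f_1(W)g^{(1)}(W)+f_2(W)g^{(2)}(W)\big)$ over \emph{all} length-$k$ windows of $T^1$ and $T^2$ scanned in text order (obtained by inverting the two BWTs), maintaining the window product incrementally; since $f_ig^{(i)}=g$ exactly on the frequency-$1$ $k$-mers, the leaf contribution equals $A-M_2$, whence $D_2^s=c+M_1+(A-M_2)$. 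Every pass runs in $O(nd)$ time, the extra store is the path plus $O(1)$ words per stack entry, i.e.\ $\lambda\log\sigma+o(n)$ bits, and I expect the boundary cases (windows and suffixes shorter than $k$, and the separators $\#_1,\#_2$) to be the only remaining fiddly points.
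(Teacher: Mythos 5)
Your central device coincides with the paper's entire proof: a character stack holding the prepended characters of the current suffix-link-tree path (length at most $\lambda$, hence $\lambda\log\sigma$ bits), kept consistent across the non-local jumps of the DFS by the invariant you state, together with the sliding-window update $q(aW,k)=q(a)\cdot q(W,k)/q(b)$ where $b$ is read $k$ positions from the top. Where you diverge is in everything the paper leaves implicit, and your additions are genuine completions rather than detours. First, the algebraic reduction --- folding the $\sigma^k$ non-occurring $k$-mers into closed-form constants via $\sum_W q(W)=(\sum_c q(c))^k$, and observing that $D_2^*$ then telescopes exactly as in Theorem~\ref{thm:kmerKernel} because $f_1f_2$ vanishes on leaves and $1/q(W)$ is shared by a node and all its children --- is needed to make the statistic computable at all from suffix-tree nodes. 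Second, and more importantly, you correctly identify that the $D_2^s$ summand is not linear in the frequencies, so the frequency-one $k$-mers, whose loci are leaves and whose characters beyond the deepest right-maximal prefix lie inside an edge and are never exposed by the enumeration, cannot be handled by telescoping or by a uniform initialization; the paper's two-sentence proof is silent on this. Your fix (a text-order pass over each string obtained by inverting the BWT with single-position $\mathtt{rangeDistinct}$ queries, accumulating $A=\sum_W\bigl(f_1g^{(1)}+f_2g^{(2)}\bigr)$ and subtracting the internal-locus part $M_2$) is sound, runs in $O(nd)$ time, and correctly isolates the leaf contribution because $f_ig^{(i)}=g$ on singletons. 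Two small cautions: in that pass, divide out the expiring character using a second decoding pointer lagging $k$ positions behind rather than buffering the window, since $k$ may exceed $\lambda$ and a $k\log\sigma$-bit buffer would violate the stated space bound; and if the character scores do not sum to one, your constants simply acquire factors of $(\sum_c q(c))^k$, which remain closed-form.
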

\begin{proof}
We proceed as in Theorem \ref{thm:kmerKernel}, pushing on the stack value $q(W,k)=\prod_{j=1}^{k}q(W[j])$ in addition to $\repr'(W)$, and maintaining a separate stack of characters to represent the string we are processing during the depth-first traversal of the generalized suffix-link tree. We set $q(aW,k)=q(a) \cdot q(W,k) / q(b)$, where $b$ is the $k$th character from the top of the character stack when we are processing $W$.
\end{proof}

An orthogonal way to measure the similarity between $T^1$ and $T^2$ consists in comparing the repertoire of all strings that \emph{do not appear} in $T^1$ and in $T^2$. Given a string $T$ and two frequency thresholds $\tau_1 < \tau_2$, a string $W$ is a \emph{minimal rare word} of $T$ if $\tau_1 \leq f_{T}(W) < \tau_2$ and if $f_{T}(V) \geq \tau_2$ for every proper substring $V$ of $W$. Setting $\tau_1=0$ and $\tau_2=1$ gives the well-known \emph{minimal absent words} (see e.g. \cite{herold2008efficient,chairungsee2012using} and references therein), whose total number can be $\Theta(\sigma n)$ \cite{crochemore1998automata}. Setting $\tau_1=1$ and $\tau_2=2$ gives the so-called \emph{shortest unique substrings} (see e.g. \cite{ileri2014shortest} and references therein), whose total number is $O(n)$, like the number of strings obtained by any other setting of $\tau_1 \geq 1$. In what follows we focus on minimal absent words, but our algorithms can be generalized to other settings of the thresholds.

To decide whether $aWb$ is a minimal absent word of $T$, where $a$ and $b$ are characters, it clearly suffices to check whether $f_{T}(aWb)=0$ and whether both $f_{T}(aW) \geq 1$ and $f_{T}(Wb) \geq 1$. It is well known that only a maximal repeat of $T$ can be the infix $W$ of a minimal absent word $aWb$, and this applies to any setting of $\tau_1$ and $\tau_2$. To enumerate all the minimal absent words, for example to count their total number $\COMPLEXITY_{-}(T)$, we can thus iterate over all nodes of $\ST_T$ associated with maximal repeats, as described below:

\begin{theorem}\label{thm:maw}
Let $T \in [1..\sigma]^{n-1}\#$ be a string. Given a data structure that supports $\mathtt{rangeDistinct}$ queries on $\BWT_T$, we can compute $\COMPLEXITY_{-}(T)$ in $O(nd)$ time and in $o(n)$ bits of space in addition to the input.
\end{theorem}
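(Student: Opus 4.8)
The plan is to run the enumerator of Theorem~\ref{thm:enumerator} over the right-maximal substrings of $T$ and to accumulate, at each such substring, the number of minimal absent words that have it as infix. As recalled before the statement, if $aWb$ is a minimal absent word then its infix $W$ is necessarily a maximal repeat, and the decomposition of $aWb$ into its first character $a$, its last character $b$, and its infix $W$ is unique; hence $\COMPLEXITY_{-}(T)$ equals the sum, over all maximal repeats $W$, of the number of pairs $(a,b)\in[1..\sigma]\times[1..\sigma]$ with $a\in\Sigma^{\ell}(W)$, $b\in\Sigma^{r}(W)$, and $f_{T}(aWb)=0$. Since every right-maximal substring that fails to be left-maximal will contribute zero (see below), it suffices to iterate over the internal nodes of $\ST_T$ and sum these per-node counts; equivalently one may first discard the non-maximal-repeats by testing $|\mathtt{rangeDistinct}(\SP{W},\EP{W})|>1$, as in the remark following Theorem~\ref{thm:enumerator}, which does not change the asymptotics.

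Next I would fix a right-maximal $W$ and read its contribution off the data the enumerator already produces. For a real left-extension character $a$ (i.e.\ $a\in\Sigma^{\ell}(W)$ and $a\neq\#$), a real character $b$ yields a minimal absent word $aWb$ iff $Wb$ occurs but $aWb$ does not, that is, iff $b\in(\Sigma^{r}(W)\setminus\{\#\})\setminus(\Sigma^{r}(aW)\setminus\{\#\})$. Because $f_{T}(aWb)>0$ forces $f_{T}(Wb)>0$, we have the containment $\Sigma^{r}(aW)\subseteq\Sigma^{r}(W)$, so the number of valid $b$ is exactly $|\Sigma^{r}(W)\setminus\{\#\}|-|\Sigma^{r}(aW)\setminus\{\#\}|$, a nonnegative quantity. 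Summing over the real left extensions of $W$ gives its contribution
\[
  \sum_{a \in \Sigma^{\ell}(W)\setminus\{\#\}} \left( |\Sigma^{r}(W)\setminus\{\#\}| - |\Sigma^{r}(aW)\setminus\{\#\}| \right).
\]
Every term is obtained in constant time from Theorem~\ref{thm:enumerator}: $|\Sigma^{r}(W)|$ is the length of array $\mathtt{chars}$ in $\repr(W)$, $|\Sigma^{r}(aW)|$ is the length of $\mathtt{chars}$ in $\repr(aW)$, and since these arrays are sorted with $\#=0$ as the smallest symbol, removing the sentinel only requires testing whether $\mathtt{chars}[1]=\#$; the real left extensions $a$ are exactly the characters $a_1,\dots,a_h$ returned for $W$, with $\#$ skipped. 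If $W$ is right-maximal but not left-maximal, its single left extension $a$ satisfies $\Sigma^{r}(aW)=\Sigma^{r}(W)$, so the formula correctly returns zero, which is why no explicit maximal-repeat filter is logically required.

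For the resource bounds I maintain a single running counter for $\COMPLEXITY_{-}(T)$; since the number of minimal absent words is $O(\sigma n)$ and $\sigma\in o(\sqrt{n}/\log n)$, this counter fits in $O(\log n)$ bits, so the only asymptotically relevant space is the $o(n)$ bits used by the enumerator. The extra work at a node $W$ is $O(h)$, proportional to its number of left extensions, and $\sum_{W}h=O(n)$ because it is bounded by the number of Weiner links of $\ST_T$; this additive $O(n)$ is dominated by the $O(nd)$ enumeration cost. I expect the only delicate point to be the bookkeeping around the sentinel: ensuring that neither $a$ nor $b$ is ever allowed to equal $\#$, while still reading cardinalities off $\repr(W)$ and $\repr(aW)$ (whose $\mathtt{chars}$ arrays may legitimately contain $\#$), and—if the alphabet $[1..\sigma]$ is allowed to contain symbols absent from $T$—separately adding the trivial length-one minimal absent words, i.e.\ the characters of $[1..\sigma]$ not occurring in $\BWT_T$. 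Once the containment $\Sigma^{r}(aW)\subseteq\Sigma^{r}(W)$ and the unique-infix decomposition are in hand, correctness of the telescoped count follows directly.
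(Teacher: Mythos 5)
Your proof is correct and takes essentially the same approach as the paper: both enumerate the maximal repeats $W$ via Theorem~\ref{thm:enumerator} and, using the containment $\Sigma^{r}(aW)\subseteq\Sigma^{r}(W)$, count at each node the pairs $(a,b)$ with $aW$ and $Wb$ occurring but $aWb$ absent, your closed form $\sum_{a}\left(|\Sigma^{r}(W)|-|\Sigma^{r}(aW)|\right)$ being exactly the paper's $\mathtt{area}=hk-\sum_{a}|\Sigma^{r}(aW)|$. The only (harmless) differences are that you read $|\Sigma^{r}(aW)|$ directly off the length of $\mathtt{chars}$ in $\repr(aW)$ rather than decrementing a counter per right extension, and that you are more explicit than the paper about handling the sentinel $\#$.
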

\begin{proof}
For clarity, we first describe how to \emph{enumerate} all the distinct minimal absent words of $T$: we specialize this algorithm to counting at the end of the proof. We use Theorem \ref{thm:enumerator} to enumerate all nodes $v$ of $\ST_T$ associated with maximal repeats, as described in Section \ref{sec:enumerator}. Let $\{a_1,\dots,a_h\}$ be the set of distinct left extensions of string $\ell(v)$ in $T$ returned by operation $\mathtt{extendLeft}(\repr(v))$, let $\mathtt{extensions}[1..\sigma+1,0..\sigma]$ be a boolean matrix initialized to all zeros, and let $\mathtt{leftExtensions}[1..\sigma+1]$ be an array initialized to all zeros. Let $h'$ be a pointer initialized to one. Operation $\mathtt{extendLeft}$ allows following all the Weiner links from $v$, not necessarily in lexicographic order: for every string $a_i \ell(v)$ obtained in this way, we set $\mathtt{leftExtensions}[h']=a_i$, we enumerate its right extensions $\{c_1,\dots,c_{k'}\}$ using array $\mathtt{chars}$ of $\repr(a_{i}\ell(v))$, we set $\mathtt{extensions}[h',c_j]=1$ for all $j \in [1..k']$, and we finally increment $h'$ by one. Note that only the columns of $\mathtt{extensions}$ that correspond to the right extensions of $\ell(v)$ are updated by this procedure. Then, we enumerate all the right extensions $\{b_1,\dots,b_k\}$ of $\ell(v)$ using array $\mathtt{chars}$ of $\repr(\ell(v))$, and for every such extension $b_j$ we report all pairs $(a_i,b_j)$ such that $a_i=\mathtt{chars}[x]$, $x \in [1..h']$, and $\mathtt{extensions}[x,b_j]=0$. This process takes time proportional to the number of Weiner links from $v$, plus the number of children of $v$, plus the number of Weiner links from $v$ multiplied by $\sigma$. When applied to all nodes of $\ST$, this takes in total $O(n\sigma)$ time, which is optimal in the size of the output. The matrices and vectors used by this process can be reset to all zeros after processing each node: the total time spent in such reinitializations in $O(n)$. 

If we just need $\COMPLEXITY_{-}(T)$, rather than storing the temporary matrices $\mathtt{extensions}$ and $\mathtt{leftExtensions}$, we store just a number $\mathtt{area}$ which we initialize to $hk$ before processing node $v$. Whenever we observe a right extension $c_j$ of a string $a_{i}\ell(v)$, we decrease $\mathtt{area}$ by one. Before moving to the next node, we increment $\COMPLEXITY_{-}(T)$ by $\mathtt{area}$.
\end{proof}

Let $\mathbf{T}_{-}$ be the infinite-dimensional vector indexed by all distinct substrings $W \in [1..\sigma]^+$, such that $\mathbf{T}_{-}[W]=1$ iff $W$ is a minimal absent word of $T$. Theorem \ref{thm:maw} can be adapted to compute the Jaccard distance between the composition vectors of two strings:

\begin{corollary}\label{cor:jaccard}
Let $T^1 \in [1..\sigma]^{n_1-1}\#_1$ and $T^2 \in [1..\sigma]^{n_2-1}\#_2$ be strings. Given data structures that support $\mathtt{rangeDistinct}$ queries on $\BWT_{T^1}$ and on $\BWT_{T^2}$, respectively, we can compute $J(\mathbf{T}_{-}^1,\mathbf{T}_{-}^2)$ in $O(nd)$ time and in $o(n)$ bits of space in addition to the input, where $n=n_1+n_2$.
\end{corollary}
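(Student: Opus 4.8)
The plan is to reduce the computation of $J(\mathbf{T}_-^1,\mathbf{T}_-^2)$ to three integers: the number $m_1$ of minimal absent words of $T^1$, the number $m_2$ of minimal absent words of $T^2$, and the number $m=||\mathbf{T}_-^1\wedge\mathbf{T}_-^2||$ of strings that are minimal absent words of both. Once these are available, $J$ follows in constant time from the definition of the Jaccard distance, so the whole task is to accumulate $m_1$, $m_2$ and $m$ in a single traversal. First I would invoke Theorem~\ref{thm:generalizedEnumerator} to enumerate every internal node $W$ of the generalized suffix tree of $T^1$ and $T^2$: for each such $W$ this yields $\repr'(W)$ (hence, through $\mathtt{chars}_1,\mathtt{chars}_2$ and $\mathtt{first}_1,\mathtt{first}_2$, the right extensions and frequencies of $W$ in each $T^i$), together with the left extensions $a$ and the quadruples $\repr'(aW)$. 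As in Theorem~\ref{thm:maw}, the only infixes of minimal absent words are maximal repeats, and every maximal repeat of $T^i$ is right-maximal in $T^1T^2$ and is therefore enumerated, so no infix is missed (throughout, I restrict the flanking characters $a,b$ to $[1..\sigma]$ so that separators are never part of a word).

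For $m_1$ and $m_2$ I would reuse the $\mathtt{area}$ idea of Theorem~\ref{thm:maw} independently on each string. Writing $\Sigma^{\ell}_i(W),\Sigma^{r}_i(W),\Sigma^{r}_i(aW)$ for the left/right extensions of $W$ and the right extensions of $aW$ that occur in $T^i$, the number of minimal absent words of $T^i$ with infix $W$ is $|\Sigma^{\ell}_i(W)|\cdot|\Sigma^{r}_i(W)|-\sum_{a\in\Sigma^{\ell}_i(W)}|\Sigma^{r}_i(aW)|$, since $(a,b)$ is a minimal absent word iff $aW$ and $Wb$ occur in $T^i$ but $aWb$ does not. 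All these quantities are just lengths of $\mathtt{chars}_i$ arrays, and $\mathtt{first}_i[1]\neq 0$ decides membership in $\Sigma^{\ell}_i(W)$; this costs $O(1)$ per left extension and automatically contributes $0$ whenever $W$ fails to be a maximal repeat of $T^i$.

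The delicate part is $m$. A pair $(a,b)$ yields a common minimal absent word $aWb$ exactly when $a\in L:=\Sigma^{\ell}_1(W)\cap\Sigma^{\ell}_2(W)$, $b\in R:=\Sigma^{r}_1(W)\cap\Sigma^{r}_2(W)$, and $b\notin\Sigma^{r}_1(aW)\cup\Sigma^{r}_2(aW)$; hence the count of common minimal absent words with infix $W$ is $|L|\cdot|R|-\sum_{a\in L}\bigl|(\Sigma^{r}_1(aW)\cup\Sigma^{r}_2(aW))\cap R\bigr|$. I would evaluate the inner term by inclusion--exclusion as $|\Sigma^{r}_1(aW)\cap R|+|\Sigma^{r}_2(aW)\cap R|-|\Sigma^{r}_1(aW)\cap\Sigma^{r}_2(aW)|$, using the containment $\Sigma^{r}_1(aW)\cap\Sigma^{r}_2(aW)\subseteq R$ (any $b$ extending $aW$ in $T^i$ also extends $W$ in $T^i$). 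To keep the traversal within $O(nd)$ time I would, on reaching $W$, build a characteristic bitvector of $R$ indexed by $[0..\sigma]$ from $\mathtt{chars}_1,\mathtt{chars}_2$ of $\repr'(W)$; then for each $a\in L$ I would scan $\mathtt{chars}_1,\mathtt{chars}_2$ of $\repr'(aW)$ to obtain the three intersection sizes (the last one via a second scratch bitvector marked by $\mathtt{chars}_1(\repr'(aW))$ and probed by $\mathtt{chars}_2(\repr'(aW))$), and I would reset only the cells actually set. The total work is then proportional to the combined length of the enumerated $\mathtt{chars}_i$ arrays, which Theorem~\ref{thm:generalizedEnumerator} already produces in $O(nd)$ time, and the two $(\sigma+1)$-entry scratch structures fit in $O(\sigma\log n)=o(n)$ bits on top of the enumerator's stack.

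The main obstacle I anticipate is precisely this interaction of the two strings in $m$: unlike the single-string case, the absence test must hold simultaneously in $T^1$ and $T^2$ while the presence of $aW$ and $Wb$ is required in both, so a naive count mixes four extension sets per pair. Getting the inclusion--exclusion identity right---in particular the containment $\Sigma^{r}_1(aW)\cap\Sigma^{r}_2(aW)\subseteq R$, which lets me drop the explicit intersection with $R$ in the last term---while at the same time avoiding any per-node $O(\sigma^2)$ reinitialization (so that the bound stays $O(nd)$ rather than $O(n\sigma)$) is where the care lies; everything else follows the telescoping/area template already established in Theorem~\ref{thm:maw}.
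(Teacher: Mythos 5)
Your proposal is correct and matches the paper's proof in all essentials: both enumerate the internal nodes of the generalized suffix tree, mark the shared right extensions $R$ in a $\sigma$-sized scratch bitvector, and count the common minimal absent words with infix $W$ as $|L|\cdot|R|-\sum_{a\in L}\bigl|(\Sigma^{r}_1(aW)\cup\Sigma^{r}_2(aW))\cap R\bigr|$, accumulated into a global counter. The only (harmless) deviations are that you evaluate the subtracted term by inclusion--exclusion instead of merging the two sorted $\mathtt{chars}$ arrays directly, and that you accumulate $||\mathbf{T}_{-}^1||$ and $||\mathbf{T}_{-}^2||$ within the same generalized traversal rather than invoking Theorem~\ref{thm:maw} on each string separately as the paper does.
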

\begin{proof}
We apply the strategy of Theorem \ref{thm:maw} to the internal nodes of the generalized suffix tree of $T^1$ and $T^2$ whose label is a maximal repeat of $T^1$ and a maximal repeat of $T^2$: such strings are clearly maximal repeats of $T^{1}T^2$ as well. We enumerate such nodes as described in Section \ref{sec:enumerator}. We keep a global variable $\mathtt{intersection}$ and a bitvector $\mathtt{sharedRight}[1..\sigma]$. For every node $v$ that corresponds to a maximal repeat of $T^{1}$ and of $T^2$, we merge the sorted arrays $\mathtt{chars}_1$ and $\mathtt{chars}_2$ of $\repr'(\ell(v))$, we set $\mathtt{sharedRight}[c]=1$ for every character $c$ that belongs to the intersection of the two arrays, and we cumulate in a variable $k'$ the number of ones in $\mathtt{sharedRight}$. Then, we scan every left extension $a_i$ provided by $\mathtt{extendLeft}'$, we determine in constant time whether it occurs in both $T^1$ and $T^2$, and if so we increment a variable $h'$ by one. Finally, we initialize a variable $\mathtt{area}$ to $h'k'$, and we process again every left extension $a_i$ provided by $\mathtt{extendLeft}'$: if $a_{i}\ell(v)$ occurs in both $T^1$ and $T^2$, we compute the union of arrays $\mathtt{chars}_1$ and $\mathtt{chars}_2$ of $\repr'(a_{i}\ell(v))$, and for every character $c$ in the union such that $\mathtt{sharedRight}[c]=1$, we decrement $\mathtt{area}$ by one. At the end of this process, we add $\mathtt{area}$ to the global variable $\mathtt{intersection}$. To compute $||\mathbf{T}_{-}^1 \vee \mathbf{T}_{-}^2||$ we apply Theorem \ref{thm:maw} to $T^1$ and $T^2$ separately.
\end{proof}

It is easy to extend Corollary \ref{cor:jaccard} to compute $\kappa(\mathbf{T}_{-}^1,\mathbf{T}_{-}^2)$, as well as to support weighting schemes based on the length and on the characters of minimal absent words.

%

\section{Markovian corrections}\label{sec:markovianCorrection}

In some applications it is desirable to assign to component $W \in [1..\sigma]^k$ of composition vector $\mathbf{T}_{\infty}$ an estimate of the \emph{statistical significance} of observing $f_{T}(W)$ occurrences of $W$ in $T$: intuitively, strings whose frequency departs from its expected value are more likely to carry ``information'', and they should be weighted more \cite{qi2004whole}. Assume that $T$ is generated by a Markov random process of order $k-2$ or smaller, that produces strings on alphabet $[1..\sigma]$ according to a probability distribution $\mathbb{P}$. It is well known that the probability of observing $W$ in a string generated by such random process is $\mathbb{P}(W) = \mathbb{P}(W[1 \ltdots k-1]) \cdot \mathbb{P}(W[2 \ltdots k]) / \mathbb{P}(W[2 \ltdots k-1])$. We can estimate $\mathbb{P}(W)$ using the empirical probability $p_{T}(W)$, obtaining the following approximation for $\mathbb{P}(W)$: $\tilde{p}_{T}(W) = p_{T}(W[1 \ltdots k-1]) \cdot p_{T}(W[2 \ltdots k]) / p_{T}(W[2 \ltdots k-1])$ if $p_{T}(W[2 \ltdots k-1]) \neq 0$, and $\tilde{p}_{T}(W) = 0$ otherwise. We can thus estimate the significance of the event that substring $W$ has empirical probability $p_{T}(W)$ in string $T$ using the following score: $z_{T}(W) = ( p_{T}(W)-\tilde{p}_{T}(W) ) / \tilde{p}_{T}(W)$ if $\tilde{p}_{T}(W) \neq 0$, and $z_{T}(W)=0$ if $\tilde{p}_{T}(W)=0$ \cite{qi2004whole}. After elementary manipulations \cite{denas}, $z_{T}(W)$ becomes: 
\begin{eqnarray*}
z_{T}(W) & = & g(n,k) \cdot \frac{f_{T}(W) \cdot f_{T}(W[2 \ltdots k-1])}{f_{T}(W[1 \ltdots k-1]) \cdot f_{T}(W[2..k])} - 1 \\
g(x,y) & = & (x-y+2)^2 / (x-y+1)(x-y+3)
\end{eqnarray*}
Since $g(x,y) \in [1..1.125]$, we temporarily assume $g(x,y)=1$ in what follows, removing this assumption later.

Let $\mathbf{T}_z$ be a version of the infinite-dimensional vector $\mathbf{T}_{\infty}$ in which $\mathbf{T}_{z}[W]=z_{T}(W)$. Among all strings that occur in $T$, only strings $aWb$ such that $a$ and $b$ are characters in $[0..\sigma]$ and such that $W$ is a maximal repeat of $T$ can have $\mathbf{T}_z[aWb] \neq 0$. Similarly, among all strings that \emph{do not occur} in $T$, only the minimal absent words of $T$ have a nonzero component in $\mathbf{T}_z$: specifically, $\mathbf{T}_z[aWb]=-1$ for all minimal absent words $aWb$ of $T$, where $a$ and $b$ are characters in $[0..\sigma]$ \cite{denas}. Given two strings $T^1$ and $T^2$, we can thus compute $\kappa(\mathbf{T}^1_z,\mathbf{T}^2_z)$ using the same strategy as in Corollary \ref{cor:jaccard}:

\begin{theorem}\label{thm:markovian}
Let $T^1 \in [1..\sigma]^{n_1-1}\#_1$ and $T^2 \in [1..\sigma]^{n_2-1}\#_2$ be strings. Given data structures that support $\mathtt{rangeDistinct}$ queries on $\BWT_{T^1}$ and on $\BWT_{T^2}$, respectively, and assuming $g(x,y)=1$ for all settings of $x$ and $y$, we can compute $\kappa(\mathbf{T}^1_z,\mathbf{T}^2_z)$ in $O(nd)$ time and in $o(n)$ bits of space in addition to the input, where $n=n_1+n_2$.
\end{theorem}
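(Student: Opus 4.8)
The plan is to reduce the computation of $\kappa(\mathbf{T}^1_z,\mathbf{T}^2_z)=N/\sqrt{D^1D^2}$ to an enumeration over those objects that can possibly contribute a nonzero component, and to reuse the combinatorial machinery already developed in Corollary~\ref{cor:jaccard} and Theorem~\ref{thm:maw}. The key observation, stated in the text preceding the theorem, is that $\mathbf{T}_z[aWb]$ can be nonzero only for strings of the special form $aWb$ where $W$ is a maximal repeat of $T$ (the ``occurring'' case, where $aWb$ itself occurs in $T$) or where $aWb$ is a minimal absent word (the ``absent'' case, where $\mathbf{T}_z[aWb]=-1$). Since both families are indexed by a maximal repeat $W$ flanked by a left character $a$ and a right character $b$, the entire sum $N=\sum_{W}\mathbf{T}^1_z[W]\mathbf{T}^2_z[W]$ and both norms $D^i=\sum_W \mathbf{T}^i_z[W]^2$ can be organized as an iteration over the internal nodes of the generalized suffix tree whose label is a maximal repeat of $T^1T^2$, exactly as enumerated by Theorem~\ref{thm:generalizedEnumerator}.

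The core of the proof is a careful bookkeeping at each such node $v$ with $\ell(v)=W$. For the occurring case, under the simplifying assumption $g(x,y)=1$, the value $z_{T^i}(aWb)$ depends only on the four frequencies $f_i(aWb)$, $f_i(W)$, $f_i(aW)$, and $f_i(Wb)$. All of these are available locally: $f_i(W)$ and the $f_i(Wb)$ for each right extension $b$ come directly from $\repr'(\ell(v))$ via the $\mathtt{first}_i$ arrays, while the $f_i(aW)$ and $f_i(aWb)$ come from the representations $\repr'(a_i W)$ of the left extensions returned by $\mathtt{extendLeft}'$. Thus, exactly as in the inner loop of Corollary~\ref{cor:jaccard}, I would iterate over all pairs $(a,b)$ with $a$ a left extension and $b$ a right extension of $W$; for each pair I determine whether $aWb$ occurs in $T^i$ (contributing the computed $z$-score) or is a minimal absent word of $T^i$ (contributing $-1$), and I accumulate the products $\mathbf{T}^1_z[aWb]\,\mathbf{T}^2_z[aWb]$ into $N$ and the squares into $D^1$ and $D^2$. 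The absent-word pairs are detected using precisely the $\mathtt{extensions}$/$\mathtt{sharedRight}$ matrix technique of the two preceding results: a pair $(a,b)$ yields a minimal absent word when $aW$ and $Wb$ both occur but $aWb$ does not, which is exactly the condition already tested there.

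The main obstacle, and the reason this needs care rather than a one-line appeal to Corollary~\ref{cor:jaccard}, is that a given pair $(a,b)$ contributes to $N$ through \emph{both} strings simultaneously, and the contribution type may differ between $T^1$ and $T^2$: the same string $aWb$ might occur in $T^1$ (giving a real-valued $z$-score) while being a minimal absent word of $T^2$ (giving $-1$), or vice versa, or it might occur in one string and not be even a minimal absent word of the other (contributing $0$). Consequently I cannot treat the occurring and absent cases as two independent passes; I must process each $(a,b)$ pair once and branch on its status in $T^1$ and in $T^2$ jointly. This is handled by noting that $W$ is required to be a maximal repeat of $T^1T^2$, but I additionally track, via the quadruple $\repr'$ (whose convention $\mathtt{first}_i[1]=0$ signals non-occurrence in $T^i$), whether $W$, $aW$, $Wb$ each occur in each individual string, which is enough to classify every $(a,b)$ into one of the finitely many joint cases and assign the correct pair of values. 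The per-node cost is dominated by the number of left extensions times $\sigma$, as in Theorem~\ref{thm:maw}, so summing over all nodes gives $O(n\sigma)\subseteq O(nd)$ time and $o(n)$ additional bits, yielding the claimed bounds. Finally, removing the assumption $g(x,y)=1$ is deferred to the remarks following the theorem, since $g(x,y)\in[1..1.125]$ depends only on $n$ and $|W|=k$, both known at node $v$, and so can be folded into the score computation without affecting the asymptotics.
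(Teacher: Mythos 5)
Your proposal has the right skeleton --- enumerate the internal nodes of the generalized suffix tree whose label is a maximal repeat of $T^{1}T^{2}$, and for each such node classify the flanking pairs $(a,b)$ jointly according to whether $a\ell(v)b$ occurs in, or is a minimal absent word of, each of $T^1$ and $T^2$ --- and your observation that the two cases cannot be handled in independent passes is exactly the point of the construction. But there is a genuine gap in the complexity argument. You propose to ``iterate over all pairs $(a,b)$ with $a$ a left extension and $b$ a right extension of $W$'' and assign each pair its contribution. The total number of such pairs over all maximal repeats is $\Theta(n\sigma)$ in the worst case (this is precisely the bound on the number of minimal absent words), and your closing claim that $O(n\sigma)\subseteq O(nd)$ is false in general: $d$ is the per-element cost of the $\mathtt{rangeDistinct}$ structure and can be $O(1)$, so $O(nd)=O(n)$ while $\sigma$ need not be constant. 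Indeed, the proof of Theorem~\ref{thm:maw} draws exactly this distinction: \emph{enumerating} minimal absent words costs $O(n\sigma)$, and the $O(nd)$ bound for \emph{counting} is obtained only by switching to the $\mathtt{area}$ counter.

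The missing idea is that, under $g(x,y)=1$, every minimal absent word shared by $T^1$ and $T^2$ contributes exactly $(-1)\cdot(-1)=+1$ to $N$, so these pairs never need to be touched individually: the paper initializes $\mathtt{area}=h'k'$ (left extensions occurring in both strings times shared right extensions), decrements it once for every pair $(a_i,b)$ such that $a_i\ell(v)b$ occurs in at least one string, and adds the residual $\mathtt{area}$ to $N$ in one step. Only the pairs whose extension actually occurs somewhere are enumerated, and there are $O(n)$ of those (they are charged to nodes and Weiner links), giving $O(nd)$ overall. Without this bulk accounting your algorithm is correct but runs in $O(n\sigma)$ time, which does not meet the stated bound. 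A second, minor deviation: you fold the computation of $D^1$ and $D^2$ into the same pass, whereas the paper runs a separate analogous pass on each $\BWT_{T^i}$; your variant can be made to work since every maximal repeat of $T^i$ is also a maximal repeat of $T^1T^2$, but it needs the same area-style counting of the minimal absent words of $T^i$ alone, which you would have to state explicitly.
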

\begin{proof}
We focus here on computing component $N$ of $\kappa(\mathbf{T}^1_z,\mathbf{T}^2_z)$: computing $D^i$ follows a similar algorithm on $\BWT_{T^i}$. We keep again a bitvector $\mathtt{sharedRight}[1..\sigma]$, and we enumerate all the internal nodes of the generalized suffix tree of $T^1$ and $T^2$ whose label is a maximal repeat of $T^{1}T^2$, as described in Section \ref{sec:enumerator}. For every such node $v$, we merge the sorted arrays $\mathtt{chars}_1$ and $\mathtt{chars}_2$ of $\repr'(\ell(v))$, we set $\mathtt{sharedRight}[c]=1$ for every character $c$ that belongs to the intersection of the two arrays, and we cumulate in a variable $k'$ the number of ones in $\mathtt{sharedRight}$. Then, we scan every left extension $a_i$ provided by $\mathtt{extendLeft}'$, we determine in constant time whether it occurs in both $T^1$ and $T^2$, and if so we increment a variable $h'$ by one. Finally, we initialize a variable $\mathtt{area}$ to $h'k'$, and we process again every left extension $a_i$ provided by $\mathtt{extendLeft}'$. If $a_{i}\ell(v)$ occurs in both $T^1$ and $T^2$, we merge arrays $\mathtt{chars}_1$ and $\mathtt{chars}_2$ of $\repr'(a_{i}\ell(v))$: for every character $b$ in the intersection of $\mathtt{chars}_1$ and $\mathtt{chars}_2$, we add to $N$ value $z_{1}(a_{i}\ell(v)b) \cdot z_{2}(a_{i}\ell(v)b)$, retrieving the corresponding frequencies from $\repr'(a_{i}\ell(v))$ and from $\repr'(\ell(v))$, and we decrement $\mathtt{area}$ by one. For every character $b$ that occurs only in $\mathtt{chars}_1$, we test whether $\mathtt{sharedRight}[b]=1$: if so, $a_{i}Wb$ is a minimal absent word of $T^2$ that occurs in $T^1$, thus we decrement $\mathtt{area}$ by one and we add to $N$ value $-z_{1}(a_{i}\ell(v)b)$. We proceed symmetrically if $b$ occurs only in $\mathtt{chars}_2$. At the end of this process, $\mathtt{area}$ counts the number of minimal absent words with infix $\ell(v)$ that are shared by $T^1$ and $T^2$: thus, we add $\mathtt{area}$ to $N$.
\end{proof}

It is easy to remove the assumption that $g(x,y)$ is always equal to one. There are only two differences from the previous case. First, the score of the substrings $W$ of $T^i$ that have a maximal repeat of $T^i$ as an infix changes, but $g(n_i,|W|)$ can be immediately computed from $|W|$, which is included in both $\repr(W)$ and $\repr'(W)$. Second, the score of all substrings $W$ of $T^i$ that do not have a maximal repeat as an infix changes from zero to $g(n_i,|W|)-1$: we can take into account all such contributions by pushing prefix-sums to the stack, as in Corollary \ref{cor:weightedSubstringKernel}. For example, to compute component $N$ of $\kappa(\mathbf{T}^1_z,\mathbf{T}^2_z)$, we can first assume that \emph{all} substring $W$ that occur both in $T^1$ and in $T^2$ have score $g(n_i,|W|)-1$, by pushing to the stack the prefix-sums described in \cite{denas}  and by enumerating only nodes $v$ of the generalized suffix tree of $T^1$ and $T^2$ such that $\ell(v)$ occurs both in $T^1$ and in $T^2$. Then, we can run the algorithm in Theorem \ref{thm:markovian}, subtracting quantity $(g(n_1,|W|+2)-1) \cdot (g(n_2,|W|+2)-1)$ from the contribution to $N$ of every string $a_{i}Wb$ that occurs both in $T^1$ and in $T^2$. 

Finally, recall that in Section \ref{sect:kmerKernels} we mentioned the problem of determining an \emph{upper bound} on the values of $k$ to be used in $k$-mer kernels. Let $\mathbf{T}_k$ be the composition vector indexed by all strings in $[1..\sigma]^k$ such that $\mathbf{T}_k[W]=p_{T}(W)$, and let $\tilde{\mathbf{T}}_k$ be a similar composition vector with $\tilde{\mathbf{T}}_k[W]=\tilde{p}_{T}(W)$, where $\tilde{p}_{T}(W)$ is defined as in the beginning of this section. 
It makes sense to disregard values of $k$ for which $\mathbf{T}_k$ and $\tilde{\mathbf{T}}_k$ are very similar, and more formally whose Kullback-Leibler divergence $\KL(\mathbf{T}_k,\tilde{\mathbf{T}}_k) = \sum_{W}\mathbf{T}_{k}[W] \cdot ( \log(\mathbf{T}_{k}[W])-\log(\tilde{\mathbf{T}}_{k}[W]) )$ is small, where $W$ ranges over all strings in $[1..\sigma]^k$. Thus, we could use as an upper bound on $k$ the minimum value $k^*$ such that $\sum_{k'=k^*}^{\infty}\KL(\mathbf{T}_{k'},\tilde{\mathbf{T}}_{k'}) < \tau$ for some user-specified threshold $\tau$ \cite{sims2009alignment}. Note again that only strings $aWb$ such that $a$ and $b$ are characters in $[0..\sigma]$ and $W$ is a maximal repeat of $T$ contribute to $\KL(\mathbf{T}_{|W|+2},\tilde{\mathbf{T}}_{|W|+2})$. We can thus adapt Theorem \ref{thm:markovian} to compute the KL divergence for a user-specified \emph{range of lengths} $[k_1..k_2]$, using just one pass over $\BWT_T$, in $O(nd)$ time and in $o(n)$ bits of space in addition to the input and the output. The same approach can be used to compute the \emph{KL-divergence kernel} $\kappa(\mathbf{T}_{KL}^1,\mathbf{T}_{KL}^2)$, where $\mathbf{T}_{KL}^{i}[W]=\KL_{T^i}(W)$ and $\KL_{T^i}(W) = \sum_{a,b \in \Sigma}p_{T^i}(aWb) \cdot ( \log(p_{T^i}(aWb)) - \log(\tilde{p}_{T^i}(aWb)) )$.

\bibliographystyle{plain}
\bibliography{string_analysis}

\begin{thebibliography}{10}

\bibitem{apostolico2010maximal}
Alberto Apostolico.
\newblock Maximal words in sequence comparisons based on subword composition.
\newblock In {\em Algorithms and Applications}, pages 34--44. Springer, 2010.

\bibitem{denas}
Alberto Apostolico and Olgert Denas.
\newblock Fast algorithms for computing sequence distances by exhaustive
  substring composition.
\newblock {\em Algorithms for Molecular Biology}, 3(1):13, 2008.

\bibitem{Belazzougui14}
Djamal Belazzougui.
\newblock Linear time construction of compressed text indices in compact space.
\newblock In {\em Symposium on Theory of Computing, {STOC} 2014, New York,
  {NY}, {USA}, May 31 - June 03, 2014}, pages 148--193, 2014.

\bibitem{BNV13}
Djamal Belazzougui, Gonzalo Navarro, and Daniel Valenzuela.
\newblock Improved compressed indexes for full-text document retrieval.
\newblock {\em Journal of Discrete Algorithms}, 18:3--13, January 2013.

\bibitem{chairungsee2012using}
Supaporn Chairungsee and Maxime Crochemore.
\newblock Using minimal absent words to build phylogeny.
\newblock {\em Theoretical Computer Science}, 450:109--116, 2012.

\bibitem{chikhi2013informed}
Rayan Chikhi and Paul Medvedev.
\newblock Informed and automated $k$-mer size selection for genome assembly.
\newblock {\em Bioinformatics}, 30(1):31--37, 2014.

\bibitem{chor2009genomic}
Benny Chor, David Horn, Nick Goldman, Yaron Levy, Tim Massingham, et~al.
\newblock Genomic {DNA} $k$-mer spectra: models and modalities.
\newblock {\em Genome Biology}, 10(10):R108, 2009.

\bibitem{crochemore1998automata}
Maxime Crochemore, Filippo Mignosi, and Antonio Restivo.
\newblock Automata and forbidden words.
\newblock {\em Information Processing Letters}, 67(3):111--117, 1998.

\bibitem{gog2011compressed}
Simon Gog.
\newblock {\em Compressed suffix trees: Design, construction, and
  applications}.
\newblock PhD thesis, University of Ulm, Germany, 2011.

\bibitem{herold2008efficient}
Julia Herold, Stefan Kurtz, and Robert Giegerich.
\newblock Efficient computation of absent words in genomic sequences.
\newblock {\em BMC Bioinformatics}, 9(1):167, 2008.

\bibitem{ileri2014shortest}
Atalay~Mert Ileri and Bojian Xu.
\newblock Shortest unique substring query revisited.
\newblock In {\em Combinatorial Pattern Matching}, pages 172--181, 2014.

\bibitem{qi2004whole}
Ji~Qi, Bin Wang, and Bai-Iin Hao.
\newblock Whole proteome prokaryote phylogeny without sequence alignment: a
  $k$-string composition approach.
\newblock {\em Journal of Molecular Evolution}, 58(1):1--11, 2004.

\bibitem{reinert2009alignment}
Gesine Reinert, David Chew, Fengzhu Sun, and Michael~S Waterman.
\newblock Alignment-free sequence comparison ({I}): statistics and power.
\newblock {\em Journal of Computational Biology}, 16(12):1615--1634, 2009.

\bibitem{shawe2004kernel}
John Shawe-Taylor and Nello Cristianini.
\newblock {\em Kernel methods for pattern analysis}.
\newblock Cambridge university press, 2004.

\bibitem{sims2009alignment}
Gregory~E Sims, Se-Ran Jun, Guohong~A Wu, and Sung-Hou Kim.
\newblock Alignment-free genome comparison with feature frequency profiles
  ({FFP}) and optimal resolutions.
\newblock {\em Proceedings of the National Academy of Sciences},
  106(8):2677--2682, 2009.

\bibitem{smola2003fast}
Alex~J. Smola and S.v.n. Vishwanathan.
\newblock Fast kernels for string and tree matching.
\newblock In S.~Becker, S.~Thrun, and K.~Obermayer, editors, {\em Advances in
  Neural Information Processing Systems 15}, pages 585--592. MIT Press, 2003.

\end{thebibliography}
\end{document}